\DeclareMathAlphabet{\mathsl}{OT1}{cmr}{m}{sl}
\newcommand{\red}[1]{\textcolor{Mahogany}{#1}}
\newcommand{\blue}[1]{\textcolor{MidnightBlue}{#1}}
\long\def\comment#1{}
\newcommand{\eg}{{\em e.g.}}
\newcommand{\ie}{{\em i.e.}}
\newcommand\uSize{\ensuremath{k}}
\newcommand\iSize{\ensuremath{m}}
\newcommand{\Ascr}{\mathcal{A}}
\newcommand{\Pscr}{\mathcal{P}}
\newcommand{\Uscr}{\mathcal{U}}
\newcommand{\Cscr}{\mathcal{C}}
\newcommand{\Sscr}{\mathcal{S}}
\newcommand{\Rscr}{\mathcal{R}}
\newcommand{\Tscr}{\mathcal{T}}
\newcommand{\Wscr}{\mathcal{W}}
\newcommand{\Zscr}{\mathcal{Z}}
\newcommand{\Gscr}{\mathcal{G}}
\newcommand{\Fscr}{\mathcal{F}}
\newcommand{\CS}{\mathcal{CS}}
\newcommand{\GS}{\mathcal{GS}}
\newcommand{\bluend}{ {\color{blue} $\blacktriangleleft$ }}
\newcommand{\arr}{\rotatebox[origin=c]{-90}{$\leftrightsquigarrow$}}
\newcommand\next{\mathcal{N}}
\newcommand\critical{\mathcal{X}}
\newcommand\goal{\mathcal{G}}
\newcommand{\tup}[1]{\langle#1\rangle}
\newcommand\lra{\longrightarrow}
\newcommand\dec[1]{dec(#1)}
\newcommand\Dmax{D_{max}}
\title{Compliance in Real Time Multiset Rewriting Models}
\author{Max Kanovich\inst{1,6} Tajana Ban Kirigin\inst{2} Vivek Nigam\inst{3,4} Andre Scedrov\inst{5,6} and Carolyn Talcott\inst{7}}
\institute{
University College, London, UK,
\email{m.kanovich@ucl.ac.uk}
\and
University of Rijeka, Department of Mathematics, HR,
\email{bank@math.uniri.hr}
\and
Federal University of Para\'iba, Jo\~ao Pessoa, Brazil,
\email{vivek@ci.ufpb.br}
\and
fortiss, Germany, \email{nigam@fortiss.org}
\and
University of Pennsylvania, Philadelphia,
USA,
\email{scedrov@math.upenn.edu}
 \and National Research University Higher School of Economics, Moscow, Russia
\and
SRI International, USA, \email{clt@csl.sri.com} 
}
\begin{document}

\maketitle

\begin{abstract}
The notion of compliance in Multiset Rewriting Models (MSR) has been introduced for untimed models and for models with discrete time.
In this paper we revisit the notion of compliance and adapt it to fit with additional nondeterminism specific for dense time domains.  Existing MSR with dense time are extended with critical configurations and non-critical traces, that is, traces involving no critical configurations.
Complexity of related {\em non-critical reachability problem} is investigated.
 Although this problem is undecidable in general, we prove that for balanced MSR with dense time the non-critical reachability problem is PSPACE-complete.
\end{abstract}

\section{
Multiset Rewriting Systems with Real Time}
\label{sec:timedmsr}

We  follow~\cite{kanovich15post} in formalizing dense time in the multiset rewriting framework.

\vspace{0.5em}
Assume a finite first-order typed alphabet, $\Sigma$, with variables, constants, function and predicate symbols.
Terms and 
formulas are constructed as usual (see~\cite{enderton}) by applying symbols of correct type (or sort).

If $P$ is a predicate of type $\tau_1 \times \tau_2 \times \cdots \times \tau_n \rightarrow o$, where $o$ is the type for propositions, and $u_1, \ldots, u_n$ are terms of types $\tau_1, \ldots, \tau_n$, respectively, then $P(u_1, \ldots, u_n)$ is a \emph{fact}.

 A fact is grounded if it does not contain any variables. 
We assume that the alphabet contains the constant $z : Nat$ denoting zero and the function $s : Nat \to Nat$ denoting the successor function. Whenever it is clear from the context, we write $n$ for $s^n(z)$ and $(n + m)$ for $s^n(s^m(z))$. 

Additionally, we allow an unbounded number of fresh values~\cite{cervesato99csfw,durgin04jcs} to be involved.

\vspace{0.5em}
In order to specify timed systems, to each fact we attach a timestamp denoting time.
\emph{Timestamped facts} are of the form $F@t$, where $F$ is a fact and $t \in \mathbb{R}$ is a \emph{non-negative real number}  called {\em timestamp}.\footnote{Notice that timestamps are \emph{not} constructed by using the successor function or any other function from the alphabet.} 
Similarly, time variables denoting timestamps,  such as variable $T$ in $F@T$, range over non-negative real numbers.

For simplicity, instead of timestamped facts, we often simply say facts. Also,  when we want to emphasize a difference between a fact $F$, and a timestamped fact $F@t$, we say that $F$ is an \emph{untimed fact}.

There is a special predicate symbol $Time$ with arity zero, which will be used to represent global time.
For example, the  fact $Time@10.4$ denotes that the current global  time of the system is $10.4$.

Given $Time@t$, we say that a fact $F@t_F$ is a \emph{future fact} when its timestamp is greater than the global time $t$, \ie~when $t_F>t$.
Similarly,  a fact $F@t_F$ is a \emph{past fact} when $t_F<t$, and  a fact $F@t_F$ is a \emph{present fact} when $t_F=t$.

\vspace{0.5em}
 A {\em configuration} is a multiset of ground timestamped facts, 
$$\Sscr = \{~Time@t, ~F_1@t_1, \ldots, ~F_n@t_n~\}$$
 with a single occurrence of a $Time$ fact.

\vspace{0.5em}
Configurations are to be interpreted as states of the system.
Configurations are modified by multiset rewrite rules which can be interpreted as actions of the system. 
There is only one rule, $Tick$,  that modifies global time:
\begin{equation}
\label{eq:tick-real}
  Time@T \lra  Time@(T + \varepsilon)
\end{equation}
where $T$ is a time variable and $\varepsilon$ can be instantiated by any non-negative real number. 
We also write $Tick_{\varepsilon}$ when we refer to the  $Tick$ rule (\ref{eq:tick-real}) for a specific $\varepsilon$.
Applied to a configuration, $\{\,Time@t, \,F_1@t_1, \ldots, \,F_n@t_n\, \}$, $Tick_{\varepsilon}$ advances global time by $\varepsilon$, resulting in  configuration $\{\, Time@(t +\varepsilon), \,F_1@t_1, \ldots, \,F_n@t_n\, \}$. 

\vspace{0,5em}
We  point out that the $Tick$ rule changes only the timestamp of the fact $Time$, while the remaining facts in the configuration (those different from $Time$) are unchanged.

\vspace{0.5em}
The remaining rules are \emph{instantaneous} as they do not modify global time, but may modify the remaining facts of configurations (those different from $Time$). Instantaneous rules have the form:
\begin{equation}
\begin{array}{l}
 Time@T, \,W_1@T_1,\ldots,\,W_p@T_p , \red{\,F_1@T_1'}, \ldots, \red{\,F_n@T_n'} \mid \Cscr 
\lra  \\ \ 
 \exists \vec{X}.\,[ \ 
 Time@T, \,W_1@T_1,\ldots,\,W_p@T_p, \blue{\,Q_1@(T + D_1)}, \ldots, \blue{\,Q_m@(T + D_m)} \, ]
\label{eq:instantaneous-real}
\end{array}
\end{equation}
where $D_1, \ldots, D_m$ are natural numbers, 
 $\Wscr = \{~W_1@T_1,\ldots,\,W_p@T_p\ \}$ is a multiset of timestamped facts, possibly containing variables,
and  $\Cscr$ is the guard of the rule which  is a set of constraints
involving the time variables  appearing in the rule's pre-condition, \ie~the variables 
$T, T_1, \ldots, T_p, T_1', \ldots, T_n'$.

Constraints may be of the form:
\begin{equation}
\label{eq:constraints}
  T > T' \pm N \quad \textrm{ and } \quad  T = T' \pm N 
\end{equation}
where $T$ and $T'$ are time variables, and $N\in\mathbb{N}$ is a natural number. 

{Here, and in the rest of the paper,  the symbol $\pm$ stands for either $+$ or $-$, that is,  constraints may involve addition or subtraction.}

We use $T' \geq T' \pm N$ to denote the disjunction of $T > T' \pm N$ and $T = T' \pm N$.  
All time variables in the guard of a rule are assumed to appear in the rule's pre-condition.  

\vspace{0,5em}
Finally, the variables $\vec{X}$ that are existentially quantified in the rule (Equation~\ref{eq:instantaneous-real})
are to be replaced by fresh values, also called \emph{nonces} in protocol security literature~\cite{cervesato99csfw,durgin04jcs}. 
As in our previous work~\cite{kanovich13ic}, we use nonces whenever a unique identification is required, for example for some protocol session or transaction identification.

\vspace{0,5em}
A rule $\Wscr \mid \Cscr \lra \exists \vec{X}.\Wscr'$ can be applied to a configuration $\Sscr$ if there is a ground substitution $\sigma$, where the variables in $\vec{X}$ are fresh, such that $\Wscr\sigma \subseteq \Sscr$ and $\Cscr\sigma$ is true. The resulting configuration is~$\big((\Sscr \setminus \Wscr) \cup \Wscr'\big)\sigma$.

More precisely, given some rule $r$, an instance of a rule is obtained by substituting all variables appearing in the pre- and post-condition of the rule with constants. This substitution applies to variables appearing in terms inside facts, variables representing fresh values, as well as time variables used in specifying timestamps of facts. An instance of an instantaneous rule can only be applied if all the constraints in its guard  are satisfied.

In order to express timed properties of the system, besides being attached to the rules, constraints may be attached to configurations. In particular, constraints may be used to express specific timed properties of configurations. For example,
\[
Time@T, {Deadline(p)@T'}, 
\Wscr \mid \{~ T+7=T'~\}
\]
represents a configuration where a deadline of process $p$ is in 7 time units.


\vspace{0,5em}
Following \cite{durgin04jcs} we say that a fact is \emph{consumed} by some rule $r$ if that fact occurs more times in $r$ on the left side  than on the right side. A fact is \emph{created} by some rule $r$ if that fact occurs more times in $r$ on the right side than on the left side. 
Hence, ${F_1@T_1'}, \ldots, {F_n@T_n'}$ are consumed by the rule (\ref{eq:instantaneous-real}) and ${Q_1@(T + D_1)}, \ldots, {Q_m@(T + D_m)}$ are created by that rule. In a rule, we usually color \red{red} the consumed facts and \blue{blue} the created facts.

\vspace{2mm}
We write ~$\Sscr \lra_r \Sscr'$\ for the one-step relation where configuration $\Sscr$ is rewritten to $\Sscr'$ using an instance of rule $r$. 
For a set of rules $\Rscr$, we define ~$\Sscr \lra_\Rscr^* \Sscr'$~ as the transitive reflexive closure of the one-step relation on all rules in $\Rscr$. We elide the subscript \ $\Rscr$, when it is clear from the context, and simply write  ~$\Sscr \lra_* \Sscr'$. 


\begin{definition}
A timed MSR system with dense time  
 $\Tscr$ is a set of rules containing only instantaneous rules (Eq.~\ref{eq:instantaneous-real}) and the $Tick$ rule (
Eq.~\ref{eq:tick-real}).
\end{definition}

A trace of a timed MSR 
 is constructed by a sequence of rules. 
A \emph{finite trace} of a timed MSR $\Tscr$  starting from an initial configuration $\Sscr_0$ is a sequence
$$
\Sscr_0 \lra \Sscr_1 \lra \Sscr_2 \lra \cdots \lra \Sscr_n   
$$
 where 
\ $\Sscr_{i} \lra_{r_i} \Sscr_{i+1}$ \ for some $r_i \in \Tscr$, for all ~$i \in \{0, \dots, n\}$ .
Infinite traces can also be considered, as in \cite{kanovich16formats}, but in this paper only finite traces will be used.

\vspace{0,5em}
Notice that by the nature of multiset rewriting there are various aspects of 
non-determinism in the model. For example, different actions and even different instantiations of the 
same rule may be applicable to the same configuration $\Sscr$, which may 
lead to different resulting configurations $\Sscr'$. 

\vspace{0,5em}
There is the additional non-determinism in the dense time model with respect to the discrete time model used in ~\cite{kanovich16formats}, provided by the choice of $\varepsilon$, representing the non-negative real value of time increase.
While in the discrete time model, 
time is advancing using the rule  \
\begin{equation}
\label{eq:tick-1}
Time@T \lra  Time@(T + 1) ,
\end{equation}
where
time always advances by one time unit,
in the dense time model, using  the rule~(Eq.~\ref{eq:tick-real}),
time can advance by any non-negative real value $\varepsilon$.

\vspace{0,5em}
\begin{remark}
\label{remark-consecutive-ticks}
Notice that the consecutive time advancements $Tick_{\varepsilon_1}$ and $Tick_{\varepsilon_2}$ applied to some configuration
 have the same effect of the single tick  $Tick_{\varepsilon}$, 
for arbitrary $\varepsilon_1 , \varepsilon_2$ and \,$ \varepsilon = \varepsilon_1 + \varepsilon_2$. 

Indeed, this is a property of the multiset rewriting formalism itself. In this context, above property reflects the continuity of time in the physical world.

With this property in mind, in any trace we can replace consecutive ticks 
$$\Sscr_0 \lra_{Tick_{\varepsilon_1}} \Sscr_1 \lra_{Tick_{\varepsilon_2}} \ \dots \  \lra_{Tick_{\varepsilon_n}} \Sscr_n$$ 
with a single tick
$$\Sscr_0 \lra_{Tick_{(\varepsilon_1 + \varepsilon_2 + \dots + \varepsilon_n)}} \Sscr_n,$$ 
and vice versa,
 without compromising the semantics of the process that is being modelled. 
\end{remark}

\subsection{
Balanced Systems}
\label{sec:balanced}

The balanced condition~\cite{kanovich11jar} is necessary for decidability of problems such as reachability studied in \cite{kanovich13ic,kanovich.mscs,kanovich15post} as well as the problem introduced in Section~\ref{sec:timedprop}.


\begin{definition} 
\label{def:balanced}
A timed MSR with dense time $\Tscr$ is \emph{balanced} if for all instantaneous rules $r \in \Tscr$, $r$ creates the same number of facts as it consumes, that is, instantaneous rules (Eq.~\ref{eq:instantaneous-real}) are of the form: 
\begin{equation}
\begin{array}{l}
 Time@T, \,\Wscr, \red{\,F_1@T_1'}, \ldots, \red{\,F_n@T_n'} \ \mid \  \,\Cscr \ \lra \\ \ \
 \exists \vec{X}.~[ \  Time@T,\, \Wscr, \blue{\,Q_1@(T + D_1)}, \ldots, \blue{\,Q_n@(T + D_n)} \, ] \ ,
\label{eq:instantaneous-balanced}
\end{array}
\end{equation}
where $\Wscr$ is a multiset of timestamped facts.
\end{definition}

\vspace{0,5em}
By consuming and creating facts, rewrite rules can increase and decrease the number of facts in configurations throughout a trace. 
However, in balanced MSR systems, the number of facts in configurations in a trace is constant, as states the following proposition.

\begin{proposition}
  Let $\Tscr$ be a balanced timed MSR with dense time. Let $\Sscr_0$ be an initial configuration with exactly $m$ facts. For all 
traces $\Pscr$ of $\Tscr$ starting with $\Sscr_0$, all configurations $\Sscr_i$ in $\Pscr$ have exactly $m$ facts.
\end{proposition}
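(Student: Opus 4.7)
The plan is to proceed by induction on the length $i$ of the trace, showing that every configuration $\Sscr_i$ occurring in $\Pscr$ has exactly $m$ facts. The base case $i=0$ holds by hypothesis, since $\Sscr_0$ is assumed to contain exactly $m$ facts.

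For the inductive step, assume $\Sscr_i$ has $m$ facts and consider the rewrite $\Sscr_i \lra_{r_i} \Sscr_{i+1}$. There are only two kinds of rules in a timed MSR, so I would do a case analysis on $r_i$. If $r_i$ is the $Tick$ rule (Eq.~\ref{eq:tick-real}), then by its definition only the timestamp of the single $Time$ fact is modified and no facts are added or removed, so $\Sscr_{i+1}$ has the same cardinality as $\Sscr_i$, namely $m$. If $r_i$ is an instantaneous rule, then since $\Tscr$ is balanced, $r_i$ has the shape given in Equation~\ref{eq:instantaneous-balanced}, meaning it consumes exactly $n$ facts $F_1@T_1',\ldots,F_n@T_n'$ and creates exactly $n$ facts $Q_1@(T+D_1),\ldots,Q_n@(T+D_n)$ (the facts in $\Wscr$ and $Time@T$ appear on both sides and are unchanged). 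Hence $|\Sscr_{i+1}| = |\Sscr_i| - n + n = m$.

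The only subtlety worth flagging is the treatment of multiplicities: since configurations are multisets, one must be careful that ``consumes $n$ facts and creates $n$ facts'' refers to multiset differences rather than set differences, which is precisely the convention already fixed in the paragraph defining ``consumed'' and ``created'' facts following Equation~\ref{eq:constraints}. With that convention in hand, the cardinality computation $|\Sscr_{i+1}| = |(\Sscr_i \setminus \Wscr)\sigma \cup \Wscr'\sigma|$ goes through directly because $|\Wscr| = |\Wscr'|$ by the balanced assumption. There is no genuine obstacle here; the proposition is essentially an immediate consequence of Definition~\ref{def:balanced} combined with the fact that $Tick$ preserves the number of facts.
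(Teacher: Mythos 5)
Your proof is correct and follows essentially the same route as the paper's: the paper's argument is just a compressed version of your induction, observing that each rule application (whether $Tick$ or a balanced instantaneous rule) preserves the number of facts and hence every configuration in the trace has the same cardinality as $\Sscr_0$. Your added care about multiset multiplicities and the explicit case split are fine elaborations but do not change the argument.
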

 \begin{proof}
Since all the rules in $\Tscr$ are balanced, rule application does not effect the number of facts in a configuration. That is, enabling configuration has the same number of facts as the resulting configuration.
 Hence, throughout the trace, all configurations have the same number of facts as the initial configuration $\Sscr_0$.
\qed
 \end{proof}

\section{Quantitative Temporal Properties}
\label{sec:timedprop}


\subsection{Goals, Critical Configurations and Non-critical Traces in MSR Systems with  Dense Time}

In order to define quantitative temporal properties, we review the notion of critical configurations and compliant traces from our previous work~\cite{kanovich12rta} and  introduce 
reachability problem for MSR systems with dense time which considers critical configurations.

\begin{definition}
\emph{Critical configuration specification} $\CS$
(resp. a \emph{ goal} $\GS$) 
is a set of pairs 
$$ 
 \{~\tup{\Sscr_1, \Cscr_1}, \ldots, \tup{\Sscr_n, \Cscr_n}~\} \ .$$
 Each pair ~$\tup{\Sscr_j,\Cscr_j}$~ is of the form:
$$
  \tup{~\{F_1@T_1, \ldots, F_p@T_p\}, \Cscr_j~} 
$$
\noindent
where $T_1, \ldots, T_p$ are time variables, $F_1, \ldots, F_p$ are facts (possibly containing variables) and $\Cscr_j$ is a set of time constraints involving only the variables $T_1, \ldots, T_p$. 

Given a critical configuration specification $\CS$ (resp. a goal $\GS$), 
 we classify a configuration $\Sscr$ as a \emph{critical configuration} w.r.t $\CS$ (resp. \emph{goal} \emph{configuration} w.r.t. $\GS$) 
 if for some $1 \leq i \leq n$, there is a grounding substitution, $\sigma$, 
such that:
\begin{itemize} 
   \item $\Sscr_i \sigma \subseteq \Sscr$; 
   \item All constraints in $\Cscr_i \sigma$ are satisfied;
 \end{itemize} 
where substitution application ($\Sscr \sigma$) is defined as usual~\cite{enderton}, \ie, by mapping time variables in $\Sscr$ to natural numbers, nonce names to nonce names (renaming of nonces) and non time variables to terms.
\end{definition}

For simplicity, when the corresponding critical configuration specification or  goal is clear from the context, we will elide it and use terminology {\em critical} or {\em goal configuration}.

\vspace{5pt}
Notice that nonce renaming is assumed as the particular nonce name should not matter for classifying a configuration as a critical or a goal configuration.
Nonce names cannot be specified in advance, since these are freshly generated in a trace, \ie~during the execution of the process being modelled.

Moving from discrete to dense time 
is not straightforward 
w.r.t. the notion of a compliant, \ie, non-critical trace.
 Consider, for example, a  trace in a  timed MSR with dense time, containing the following configurations and a $Tick$:
$$
 Time@1.5,  \, F@3.5 \ \lra_{Tick_{3}} \ Time@4.5,  \,F@3.5 
$$
which could potentially be considered as non-critcal w.r.t. with the   critical configuration specification:
$$
Time@T,  \,F@T_1  \ \mid \ \{\, T_1 =T \, \}\ ,
$$
as it doesn't contain any critical configurations.
However, a trace containing rules:
$$
\begin{array}{lc}
Time@1.5, \, F@3.5  \ \lra_{Tick_{2}} \ Time@3.5,  \,F@3.5 \ 
\lra_{Tick_{1}} \ Time@4.5,  \,F@3.5 
\end{array}
$$
would not be non-critical w.r.t. the same   critical configuration specification since it contains the critical configuration \ $\{ \,Time@3.5, \, F@3.5\, \} $.
Above traces differ only in the representation of  time flow and they model the same real-time process.
In reality, due to continuity of time, the  process would reach such a critical state, \ie~it would not skip over this undesired state.
Clearly, this inconsistency is not what we want in our model.

\vspace{0,5em}
As the above example suggests,  in the setting with dense time it is particularly important that the notion of a non-critical trace is properly defined.
While in systems with discrete time, time can increase only by one time unit at a time, when time is dense, time can increase by any value, however small, and however large. 
That is how we model the natural continuous aspect of time we know in our everyday life. In particular, recall Remark~\ref{remark-consecutive-ticks}, illustrating how the continuity of time flow is implicitly embedded in the MSR formalism.
Namely,   given arbitrary $\varepsilon>0$ and any positive $\varepsilon_1 < \varepsilon $, there exists $\varepsilon_2>0$  such that the time $Tick$ for $\varepsilon$ has the same effect as the  $Tick$ for $\varepsilon_1$ followed by the $Tick$ for $\varepsilon_2$. That is, if \
$$\Sscr_0 \lra_{Tick_{\varepsilon}} \Sscr_1$$  \ then \ 
  $$\Sscr_0 \lra_{Tick_{\varepsilon_1}} \Sscr_2 \lra_{Tick_{\varepsilon_2}} \Sscr_1\ . $$ 
Clearly, $ \varepsilon= \varepsilon_1 + \varepsilon_2$ holds.
Relying on above property, we now 
define which  traces  may be considered as compliant in the dense time setting.

\begin{definition}
\label{def:compliant}
 Given a 
timed MSR with dense time $\Tscr$ and a critical configuration specification $\Cscr\Sscr$,
a trace $\Pscr$ of $\Tscr $  is \emph{non-critical} if 
 no  critical configuration is reached along any trace obtained by replacing  any subtrace 
$$\Sscr_i \lra_{Tick_{\varepsilon}} \Sscr_{i+1}$$
 of \ $\Pscr$ 
with \qquad \quad \qquad\qquad   $\Sscr_i \lra_{Tick_{\varepsilon_1}} \Sscr' \lra_{Tick_{\varepsilon_2}} \Sscr_{i+1} $
\\[6pt]
for arbitrary $\varepsilon_1 < \varepsilon$, such that \,   $ \varepsilon = \varepsilon_1 + \varepsilon_2$ holds. 
\end{definition}

 Above decomposition of the $Tick$ rules, in all possible ways of consecutive $Tick$s, ensures that the continuity of time and the notion of non-critical traces are well combined.

On the other hand, however, checking whether a given trace in a system with dense time is non-critical  is potentially more challenging than in the untimed setting~\cite{kanovich11jar} and models with discrete time~\cite{kanovich12rta,kanovich16formats}. 
Testing whether a trace is non-critical  in models with dense time requires potentially checking through an infinite number of traces. This could possibly effect the complexity  of the corresponding  non-critical reachability problem.
Fortunately, we can rely on our equivalence relation among configurations, \ie~on our technical machinery called  circle-configurations, with respect to this issue as well. 
We  show this result  in Section~\ref{sec: compliance}.


\subsection{Verification Problem}

\begin{definition}
\label{def:non-critical-trace}[Non-critical reachability problem]\ \\
 Given a timed MSR $\Tscr$,  a goal 
$\GS$,  a critical configuration specification $\CS$ and an initial configuration $\Sscr_0$,
is  there a non-critical trace, $\Pscr$, that leads from $\Sscr_0$  to a goal configuration? 
 \end{definition}

Our complexity results,  for a given MSR  $\Tscr$, an  initial configuration $\Sscr_0$, a critical configuration specification $\CS$ and a goal 
 $\GS$, mention the value $\Dmax$ which is an  upper-bound on the natural numbers appearing in $\Sscr_0$, $\Tscr$, $\CS$ and $\GS$, which is syntactically inferred from timestamps and numbers appearing in facts, rules and constraints of $\Sscr_0$,  $\Tscr$, $\CS$ and $\GS$.

\vspace{5pt}
For the complexity results for non-critical reachability problem (bisimulation of non-critical traces) with dense time  we define {immediate successors} for configurations, motivated by the non-determinism in the model related to the choice of the positive real number $\varepsilon$ used in the $Tick$ rule.
Namely,  unless some restrictions are imposed on a trace by some time sampling, $Tick_{\varepsilon}$ rule is applicable to every configuration, and for every $\varepsilon >0$. However, the choice of $\varepsilon$ is important as it may have 
different effects on representation of time in a trace.
Consider, for example, configuration
$$
\Sscr = \{ \,Time@2, \,F@0.4, \,G@2.5, \,H@1\,\} \ .
$$  
Applying a $Tick$ rule to $\Sscr$ for any $\varepsilon<0.4$ has the same effect w.r.t time constraints satisfied by the resulting configuration, regardless of a particular  $\varepsilon<0.4$ used. In fact, it has no effect in that sense,
since the same set of constraints is satisfied by the resulting configuration as by configuration $\Sscr$.
Advancing time in $\Sscr$ by $\varepsilon = 0.4$ is different. Resulting configuration \ $$
\Sscr' = \{ \,Time@2.4, \,F@0.4, \,G@2.5,\, H@1\,\} \ .
$$
satisfies \eg~constraint $T' =T$, related to facts $Time@T$ and $G@T'$, which is not satisfied by $\Sscr$. 
Now, applying a $Tick_{\varepsilon}$ to $\Sscr'$ for any $\varepsilon >0$ would change the set of constraints satisfied by the resulting configuration $\Sscr''$. 
The set of constraints satisfied by  $\Sscr''$ will depend on the value of $\varepsilon$.
For example, for \ $\varepsilon = 0.35$ constraint  $T>T'+2$, where $T, T'$ relate to facts $Time@T$ and $F@T'$,  would be satisfied in $\Sscr''$ ($2.45>0.4+2$) and $T=T'+2$ would not, while for $\varepsilon=0.3, $ constraint $T= T'+2$ would hold.

\vspace{0,3em}
With the above consideration on the importance on how much the time advances by a single $Tick$ rule, we define the following, {\em successor},  relation among configurations.

\begin{definition}
\label{def: succesor conf}
Given a timed MSR $\Tscr$ with dense time, and a natural number $d$, 
 let  ~$\Cscr_d$~ 
be a set of all constrains containing natural numbers up to $d$: 
$$
\Cscr_d 
 = \{ \  T > T' \pm N, \ T \geq T' \pm N , \   T = T' \pm N \ \mid  \ N \leq \ d \ \}
$$
We say that configuration $\Sscr_2$ is an \emph{immediate successor} of configuration  $\Sscr_1$ w.r.t. $d$ if \   
\begin{itemize}
\item[i)] There exists $\varepsilon>0$ such that \  $\Sscr_1 \lra_{Tick_{\varepsilon}}\Sscr_2$;
\item[ii)]  $\Sscr_1$ and  $\Sscr_2$ \ do not satisfy the same set of constraints from $\Cscr_d$, where variables $T$ and $T'$ refer to timestamps of same facts from $ \Sscr_1$ and $ \Sscr_2$;
\item[iii)] For all $ \varepsilon' >0, \ \varepsilon'<\varepsilon$ \ if \ $ \Sscr_1 \lra_{Tick_{\varepsilon'}} \Sscr'$  then $\Sscr'$ satisfies the same constraints from $\Cscr_d$ either as $\Sscr_1$ or as $\Sscr_2$.
\end{itemize}
When  $\Sscr_2$ is an {immediate successor} of   $\Sscr_1$   w.r.t. $d$
we write\ \ \ $\Sscr_1 \lra_{Tick_{IS}^d}\Sscr_2$ . 
\end{definition}

When $d$ is clear from the context we simply say that $\Sscr_2$ is an {immediate successor} of   $\Sscr_1$   
and write \ $\Sscr_1 \lra_{Tick_{IS}}\Sscr_2$.

Notice that in the above example,  $\{ \,Time@2.05, \,F@0.4, \,G@2.1, \,H@1\,\}$  is an immediate successor of $\Sscr$, while configuration 
$
\{\, Time@2.4, \,F@0.4, \,G@2.5, \,H@1\,\} 
$
is not because, \eg
$$
\begin{array} {l}
\{ \,Time@2, \,F@0.4, \,G@2.5, \,H@1\,\}\, \lra_{Tick_{0.05}} \\[5pt] \quad 
\{ \,Time@2.05, \,F@0.4, \,G@2.5, \,H@1\,\}\, 
 \lra_{Tick_{0.35}}  \\[5pt] \qquad 
\{ \,Time@2.4, \,F@0.4, \, G@2.5, \,H@1\,\} \ 
\end{array}
$$
where all of the above configurations satisfy different time constraints.

In general, the immediate successor of a configuration is not unique. For example, 
 $\{ \,Time@2.15, \,F@0.4, \,G@2.5, \,H@1\,\}$ and $ \{ \,Time@2.3, \,F@0.4, \,G@2.5, \,H@1\,\}$ are both  immediate successors of $\Sscr'$.
On the other hand, the immediate successor of 
$\{ \,Time@2.15,\, F@0.4,\, G@2.5, \,H@1\,\}$   is unique,  $\{\, Time@2.4, \,F@0.4, \,G@2.5,\, H@1\,\}$.


\vspace{1em}

There is a clear connection between non-critical traces and immediate successor configurations.
Notice that if  neither  $ \Sscr_i$ nor  its immediate successor configuration $ \Sscr_{i+1}$ is critical, then the condition on non-critical traces given in Definition~\ref{def:compliant}  is satisfied. 

\begin{proposition}
\label{th: IS critical}
Let $\Tscr$ be a timed MSR with dense time, and  $d$ 
 a natural number. 
Let \ $\Sscr \lra_{Tick_{\varepsilon_{IS}}} \Sscr'$.
If \  $ \Sscr$  and $ \Sscr'$ are not critical w.r.t. some critical configuration specification $\CS$ involving constraints  form $\Cscr_{d}$, 
then for any \ $ \varepsilon' >0$, $\varepsilon' <\varepsilon$, \ the configuration \ $\Sscr''$ such that  \
$\Sscr \lra_{Tick_{\varepsilon_1}} \Sscr'' \lra_{Tick_{\varepsilon_2}} \Sscr'$,   is not critical.
\end{proposition}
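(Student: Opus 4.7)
The plan is to derive the conclusion from clause (iii) of Definition~\ref{def: succesor conf} together with the structural fact that $Tick$ rules only alter the timestamp of the unique $Time$ fact. Writing $\Sscr \lra_{Tick_{\varepsilon_1}} \Sscr'' \lra_{Tick_{\varepsilon_2}} \Sscr'$ with $\varepsilon_1 + \varepsilon_2 = \varepsilon$ and $0 < \varepsilon_1 < \varepsilon$, I would first record that the multisets of non-$Time$ facts in $\Sscr$, $\Sscr''$ and $\Sscr'$ coincide, and that $\Sscr \lra_{Tick_{\varepsilon_1}} \Sscr''$ with $\varepsilon_1 < \varepsilon$ lets clause (iii) apply: $\Sscr''$ satisfies the same ground instances of $\Cscr_d$-constraints (over timestamps of identified facts) as either $\Sscr$ or $\Sscr'$.

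I would then argue by contradiction. Suppose $\Sscr''$ is critical, so that there is $\tup{\Sscr_i, \Cscr_i} \in \CS$ and a grounding substitution $\sigma$ with $\Sscr_i \sigma \subseteq \Sscr''$ and $\Cscr_i \sigma$ satisfied. In Case 1, where $\Sscr''$ and $\Sscr$ agree on $\Cscr_d$-constraints, I would construct $\sigma'$ coinciding with $\sigma$ on all variables except that, if $Time$ occurs in $\Sscr_i$, its time variable is remapped to the $Time$-timestamp of $\Sscr$. The inclusion $\Sscr_i \sigma' \subseteq \Sscr$ is immediate since the non-$Time$ facts are shared, and satisfaction of $\Cscr_i \sigma'$ in $\Sscr$ follows because every constraint of $\Cscr_i$ is assumed to lie in $\Cscr_d$ and is therefore preserved between $\Sscr''$ and $\Sscr$ by clause (iii) under the canonical identification of facts. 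Hence $\Sscr$ would be critical, contradicting the hypothesis. Case 2 is entirely symmetric, with $\Sscr'$ in place of $\Sscr$.

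The step requiring the most care is the correspondence of constraint instances under the canonical identification of facts: a constraint of the form $T \star T' \pm N$ in $\Cscr_i \sigma$ that mentions the $Time$ fact instantiates to different ground formulas in $\Sscr$ and $\Sscr''$ (since their $Time$-timestamps differ), but as long as $N \leq d$ the corresponding ground constraint in $\Sscr$ belongs to $\Cscr_d$, and its truth value is tied to that in $\Sscr''$ by clause (iii). Beyond that bookkeeping, the argument is a direct translation of the immediate-successor property into preservation of non-criticality, so I expect no deeper obstacle.
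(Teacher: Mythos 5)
Your proposal is correct and follows essentially the same route as the paper's proof: both reduce the claim to clause (iii) of Definition~\ref{def: succesor conf}, noting that the intermediate configuration $\Sscr''$ satisfies the same $\Cscr_d$-constraints as either $\Sscr$ or $\Sscr'$ and that $\CS$ only uses such constraints. Your version merely spells out in more detail (via the explicit substitution argument and the remapping of the $Time$ variable) what the paper's proof asserts in one line.
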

\begin{proof}
Let $\Sscr \lra_{Tick_{\varepsilon_{IS}}} \Sscr'$, \ and assume neither $\Sscr$ nor $\Sscr'$  is critical. Let 
$$\Sscr \lra_{Tick} \Sscr'' \lra_{Tick} \Sscr' \ . $$


Since $ \Sscr'$ is an  immediate successor of $ \Sscr$, as per Definition~\ref{def: succesor conf},   such configuration $\Sscr''$ satisfies the same set of constraints  form $\Cscr_{d}$ as either $\Sscr$ or $\Sscr'$.
This includes the constrains used in $\CS$.
 Since both $\Sscr$ and $\Sscr'$ are not critical, $\Sscr''$ is not critical as well.
\qed
\end{proof}



\section{Complexity Results for Balanced Timed MSR  with Dense Time}
\label{sec:complex-real}


Reachability and the related problems for MSR are undecidable in general~\cite{kanovich09csf}. However, by imposing some restrictions on the form of the rewrite rules, such as using only balanced rules and bounding the size of facts, these problems become decidable, even in timed models with fresh values.
 
A summary of related complexity results in shown in Table \ref{table:summary TR}.

\begin{table}[t]
\caption{Summary of the complexity results for the reachability and non-critical reachability problems. 
These results also hold for  MSR models with fresh values.
}
\label{table:summary TR}

\begin{center}
    \begin{tabular}{ |@{\quad} c@{\quad}  | c|  c | c | } 
\hline
\multicolumn{2}{ |c| }{MSR}
&  {\bf \ \ Reachability Problem \ \ } &{\bf \ \ Non-critical Reachability \ \ }  \\[0,1em]    \hline
        \multirow{6}{*}{Balanced} &   \multirow{2 }{*}{untimed}  & PSPACE-complete &  PSPACE-complete \\ 
& & ~\cite{kanovich11jar,kanovich13ic}& ~\cite{kanovich11jar,kanovich13ic}
\\   \cline{2-4}
  \multirow{2 }{*} &  \multirow{2 }{*}{\ \ discrete time \ \ }  & PSPACE-complete &  PSPACE-complete \\ 
 & &  ~\cite{kanovich17mscs} &  ~\cite{kanovich17mscs}
 \\ \cline{2-4}
 \multirow{2 }{*} &  \multirow{2 }{*}{real time}  & PSPACE-complete &  {\bf \textcolor{red}{PSPACE-complete}}
\\ 
& &  ~\cite{kanovichJCS17} & {\bf \textcolor{red}{new!}}
\\   \hline
\multicolumn{2}{ |c| }{\multirow{2}{*}{\ \ Not necessarily balanced \ \ } } & 
{Undecidable} & {Undecidable}  \\ 
\multicolumn{2}{ |c| }{}& ~\cite{kanovich09csf} &  ~\cite{kanovich09csf} \\
\hline
    \end{tabular}

\normalsize
\end{center}

\end{table}

In this section we investigate the complexity of  the non-critical reachability problem for balanced systems with facts of bounded size.

In this new setting with dense time, 
the non-critical reachability problem 
combines   quantitative temporal properties defined for timed MSR  with the refined notion of compliance.
Our results rely heavily on the abstractions called circle-configurations.
As we will show in  Section~\ref{sec: compliance},
 circle-configurations and the related time advancement rules, $Next$,  are defined in such a way to reflect  similar characteristics related to advancement of time in dense time models.

\vspace{3pt}
As discussed above, we  assume a bound, $k$, on the size of facts. However,  we do not  impose an upper bound on the values of timestamps. 
Also, our  timed MSRs with dense time are constructed over $\Sigma$, a finite alphabet with $J$ predicate symbols and $E$ constant and function symbols and can involve an unbounded number of fresh values.

\subsection{Circle-configurations}
\label{sec: compliance}

In order to handle dense time, and in particular for our complexity results, in our previous work~\cite{kanovich15post} we introduced an equivalence relation among configurations. 
We now review main ideas behind this machinery. For a more detailed exposition of this approach  see~\cite{kanovich15post}.

The equivalence of  configurations involves  an
upper bound $\Dmax$ on the numeric values mentioned in the specification of the considered system and problems:
We set $\Dmax$ to be a
 natural number such that \mbox{$\Dmax > n + 1$} for any number $n$ (both real or natural) appearing in the timestamps of the 
initial configuration, or the $N$s and $D_i$s in constraints (Eq.\ref{eq:constraints}) or rules (Eq.\ref{eq:instantaneous-real}) of the timed MSR, in goal and critical  configuration specification.

Notice that immediate successor configurations also involve an upper bound, $d$, on natural numbers appearing in time constraints.
For a given problem, we will extract the value $\Dmax$ as described above, and we will consider immediate successor configurations w.r.t. the same bound $\Dmax$.

Configurations 
are defined as equivalent if   they contain the same  (untimed) facts, up to nonce renaming, and if they satisfy the exact same set of  constraints.
When we say that some configurations  satisfy the same constraint, we  intend to say that time variables of that constraint refer to the same facts in both configurations.

\begin{definition}
\label{def:equivalence}
 Given a 
timed MSR $\Tscr$ with dense time, a goal $\GS$, a critical configuration specification $\CS$ and an initial configuration $\Sscr_0$,
 let $\Dmax$ be an upper bound on the numeric values
 appearing in $\Tscr$,  $\GS$, $\Cscr\Sscr$ and $\Sscr_0$.
Let
\begin{equation}
\label{eq:two-configurations}
\begin{array}{l}
\Sscr = \{\,  Q_1@t_1,\,  Q_2@t_2, \ldots,\,  Q_n@t_n \ \}
\quad \textrm{and} \ \quad
\widetilde{\Sscr} =  \{ \ \widetilde{Q}_1@\widetilde{t}_1, \, \widetilde{Q}_2@\widetilde{t}_2, \ldots,
\, \widetilde{Q}_n@\widetilde{t}_n \ \}
\end{array} 
\end{equation}
be two configurations written in canonical way where the two sequences of
timestamps \ $t_1, \ldots, t_n$ \ and \ $\widetilde{t}_1, \ldots, \widetilde{t}_n$ \ are non-decreasing. 
(For the case of equal timestamps, we sort the facts in alphabetical order, if necessary.) 
We say that  configurations  $\Sscr$
and $\widetilde{\Sscr}$ are \emph{equivalent configurations}
\ if the following conditions hold:
\begin{itemize}
\item[(i)]
There is a bijection $\sigma$ that maps the set of all nonce names appearing in 
configuration $\Sscr$ to the set of all nonce names appearing in
configuration $\widetilde{\Sscr}$,
such that \ $Q_i \sigma = \widetilde{Q}_i$, \ for each $i \in \{1,\dots, n\}$;  and
\item[(ii)]  Configurations ${\Sscr}$ and $\widetilde{\Sscr}$ satisfy the same constraints, that is:
$$
\begin{array}{l}
t_i > t_j \pm D \quad  \textrm{ iff } \quad \widetilde{t_i} > \widetilde{t_j} \pm D \quad  \ \text{and} \\
 t_i = t_j \pm D \quad  \textrm{ iff } \quad \widetilde{t_i} = \widetilde{t_j} \pm D ,
\end{array}
$$
for all \ $1\leq i \leq n$, $1\leq j \leq n$ \  and \ $D \leq \Dmax$.
\end{itemize}
When    $\Sscr$ and $\widetilde{\Sscr}$ are equivalent we  write\  $\Sscr \sim_{\Dmax} \widetilde{\Sscr}$, or simply  $\Sscr \sim \widetilde{\Sscr}$.
\end{definition}

As we already pointed out , when we say that  $\Sscr$ and $\widetilde{\Sscr}$ satisfy the same constraints, we mean that the time variables in the constraint refer to the same facts $Q_i$ and $\widetilde{Q}_i$, up to nonce renaming.

Notice that  no configuration is equivalent to its immediate successor configuration.

\vspace{0,5em}
In~\cite{kanovich15post} we also introduced an illustrative representation of the above equivalence relation,  called \emph{circle-configuration}.

\begin{definition}
\label{def:circle-conf}
Let $\Tscr$ be a timed MSR with dense time,  $\GS$ a goal, $\CS$  a critical configuration specification and  $\Sscr_0$ an initial configuration.
Let $\Dmax$ be an upper bound on the numeric values  appearing in $\Tscr$,  $\GS$, $\Cscr\Sscr$ and $\Sscr_0$,  and \ 
$\Sscr = F_1@t_1, F_2@t_2, \ldots, F_n@t_n, Time@t \ . $
The pair\  $ \Ascr_{\Sscr}=\tup{\Delta_{\Sscr}, \Uscr_{\Sscr}}$  \ is the  \emph{circle-configuration} of the configuration $\Sscr$ defined as follows. 
The  \emph{$\delta$-configuration} of
 $\Sscr$, $\Delta_{\Sscr}$, is:
\[
 \Delta_{\Sscr} = \left\langle
 \begin{array}{l}
 \{P_{1}^1, \ldots, P_{m_1}^1\}, \delta_{1, 2}, \{P_{1}^2, \ldots, P_{m_2}^2\}, \delta_{2,3}, \ldots 
, \delta_{j-1, j}, \{P_{1}^j, \ldots, P_{m_j}^j\}
 \end{array}\right\rangle
\]
 where \
 $\{P_1^1, \ldots, P_{m_1}^1, P_1^2, \ldots, P_{m_j}^j\} = \{F_1, \ldots, F_n, Time\}$,
\ timestamps of facts \ \mbox{$P_1^i, \ldots, P_{m_i}^i$} \ have the same integer part, $t^{\, i}$,  $\forall i = 1, \dots, j$ ,
and 
 $$ 
\delta_{i,i+1} =
\left\{\begin{array}{l}
{t^{\, i+1}} - t^{\, i}, \textrm{ if } \ {t^{\, i+1}} - t^{\, i} \leq \Dmax\\
 \infty, \textrm{ otherwise }
\end{array}\right.
\ , \quad  i = 1, \dots, j-1 \ .
$$
The  \emph{unit circle} of
$\Sscr$, \ $\Uscr_{\Sscr}$, is:
$$
\Uscr_{\Sscr} = [ \ \{Q_{1}^0, \ldots, Q_{m_0}^0\}_\Zscr, \{Q_{1}^1, \ldots, Q_{m_1}^1\}, \ldots, \{Q_{1}^k, \ldots, Q_{m_k}^k\} \ ]
$$
 where \
$\{Q_1^0, \ldots, Q_{m_0}^0, Q_1^1, \dots, Q_{m_k}^k\} = \{F_1, \ldots, F_n, Time\}$,  \
timestamps of facts in the same class, \ $Q_1^i, \ldots, Q_{m_i}^i$ \ have the same decimal part,  $\forall i = 0, \dots, k$ ,
 timestamps of facts \ $Q_1^0, \ldots, Q_{m_0}^0$ \ are integers, and 
the classes are ordered in the increasing order, \ie, $\dec{Q_i^l}<\dec{Q_j^{l'}}$ for all $i\neq j$, where 
 $1\leq i \leq m_l$, $1 \leq j \leq m_{l'}$, $0\leq l\leq k$, $1 \leq l'\leq k$.
%

We write \  $\Uscr_{\Sscr}(Q_{j}^i) = i$ \ to denote the class in which the fact $Q_{j}^i$ appears in $\Uscr_{\Sscr}$. 

\end{definition}

\begin{figure}[t]
\centering
\begin{minipage}{.45\textwidth}
  \centering
\vspace{1em}
\includegraphics[width=4.6cm]{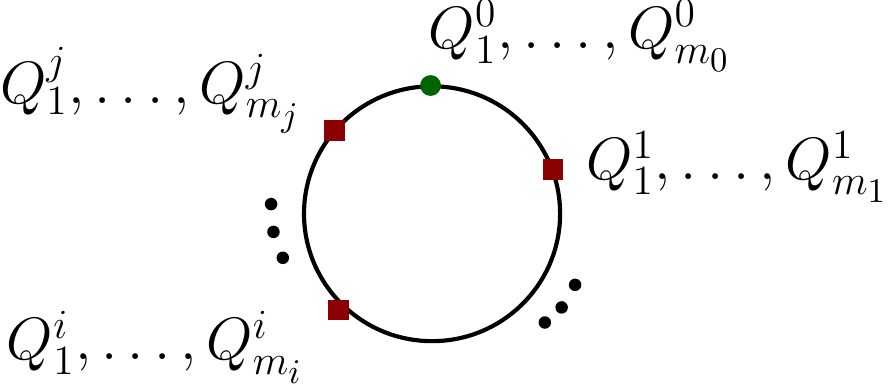}
\vspace{1.1em}
  \captionof{figure}{Unit Circle}
  \label{fig:unit-graph}
\end{minipage}%
\begin{minipage}{.55\textwidth}
  \centering
  \includegraphics[width=5.6cm]{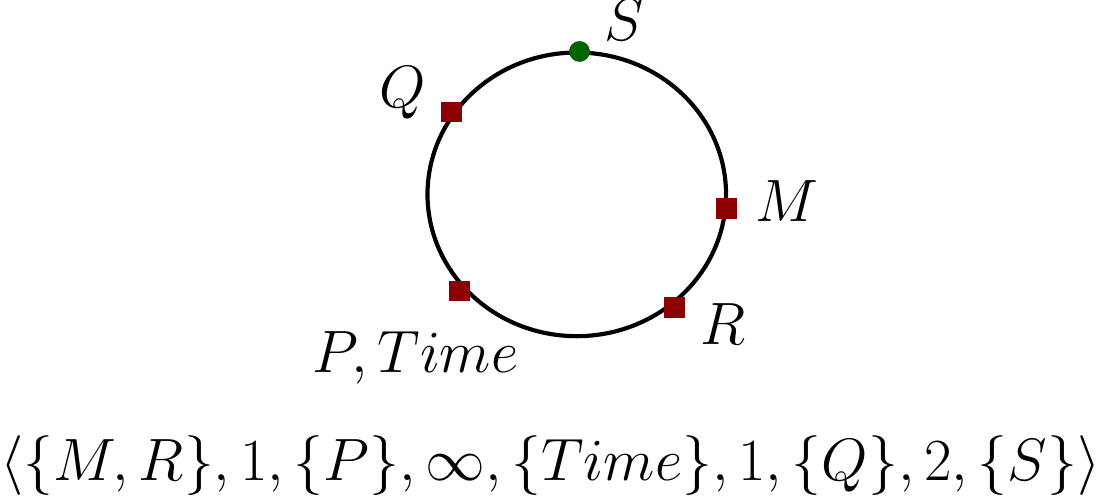}
  \captionof{figure}{Circle-Configuration  }
  \label{fig:circle-graph}
\end{minipage}
\vspace{-1em}
\end{figure}

For simplicity, we sometimes write $\Ascr$ and $\tup{\Delta, \Uscr}$ instead of  $\Ascr_{\Sscr}$ and $\tup{\Delta_{\Sscr}, \Uscr_{\Sscr}}$, when the corresponding configuration is clear from the context.

\vspace{0,5em}
We graphically represent a unit circle as shown in Figure~\ref{fig:unit-graph}.
The class marked with the subscript $\Zscr$,  $\{Q_{1}^0, \ldots, Q_{m_0}^0\}_\Zscr$, is called  the \emph{zero point} and is marked as the (green) ellipse at the top of the circle.
The remaining classes are placed on the circle as 
the (red) squares ordered clockwise starting from the zero point. From the above graphical representation, given  in Figure~\ref{fig:unit-graph},  it can easily be seen that the decimal part of the timestamp of the fact $Q_{1}^1$ 
is smaller than the decimal of the timestamp of the fact $Q_{1}^2$, while the decimal part of the 
timestamps of the facts $Q_{1}^i$ and $Q_{2}^i$ are equal. 
The exact points where the classes  are 
placed on the circle are not important, only their relative positions matter.
As an example, the circle-configuration of configuration \

\vspace{2pt}
\qquad 
$ \{ \ M@3.01, \,R@3.11, \,P@4.12, \,Time@11.12, \,Q@12.58, \,S@14\ \}$
\\[3pt]
for~$\Dmax = 3$~ consists of 
the $\delta$-configuration \ 
$$\Delta_{\Sscr_1} = \left\langle  \  \{ M, R\}, 1, \{ P\}, \infty, \{Time\}, 1, \{Q\}, 2, \{ S\}  \ \right\rangle$$
and  the unit circle \  
$$  [\ \{S\}_\Zscr, \{M\}, \{R\}, \{P, Time\},\{Q\} \ ],$$
 as illustrated in Figure~\ref{fig:circle-graph}.

Notice that, although the graphical representation of the circle-configuration is very illustrative, a circle-configuration is given as a pair of sequences containing a finite number of symbols. Although these sequences do not contain any real numbers, they provide enough information related to satisfaction of  time  constraints, which is necessary \eg~for rule application. 
Circle-configurations are, hence, an elegant representation of  configurations, considering  that timestamps range over dense, real time domain and  that there is no  upper bound on the values of timestamps.

 \vspace{0,5em}
When compared to the equivalence relation between configurations (Definition \ref{def:equivalence}), circle-configurations contain an additional bit of information. While for the equivalence relation only relative differences between concrete values of timestamps of facts are important,  because of the zero point on the unit circle, circle-configurations may differentiate configurations based on the decimal part of their timestamps.
For example, configurations 
\ $ \{ \,Time@1, \,Q@1.54, \,S@2.4\ \} \ \ \text{and} \ \ \{ \,Time@1.12, \,Q@1.66, \,S@2.52\ \}
$
are equivalent, but have different unit circles, related only to the placement of facts at the zero point.

In~\cite{kanovich15post} we have  shown how the notion of circle-configurations corresponds to equivalence relation between configurations.
In particular, configurations corresponding  to the same circle-configuration are equivalent.
We are, therefore, able to say that a circle-configuration  $\tup{\Delta, \Uscr}$ corresponding to a configuration $\Sscr$ {satisfies a constraint} $c$ if the configuration $\Sscr$ satisfies constraint $c$. We
also say that a {rule is applicable to a  circle-configuration} if that rule is applicable to the corresponding configuration.
%
 Furthermore, we say that a  \emph{circle-configuration is critical} \ iff it is the circle-con\-fi\-gu\-ra\-tion of a critical  configuration. 
Analogously,  we say that a  circle-configuration is a \emph{goal circle-configuration} \ iff it is the circle-configuration of a goal configuration.

\vspace{0,5em}
In~\cite{kanovich15post} we show in detail how both instantaneous rules and the time advancement over circle-configurations are compiled and  applied (for more details see~\cite[Section 4.2]{kanovich15post}).
For an instantaneous rule $r$, we write  $[r]$ for the corresponding rewrite rule  over circle-configurations.

\newcommand\figspace{\vspace{4pt}}

\begin{figure}[t]
\textbullet\quad\textbf{\small Time in the zero point and not in the last class in the unit circle, where $n \geq 0$:}
\figspace

\includegraphics[width=8cm]{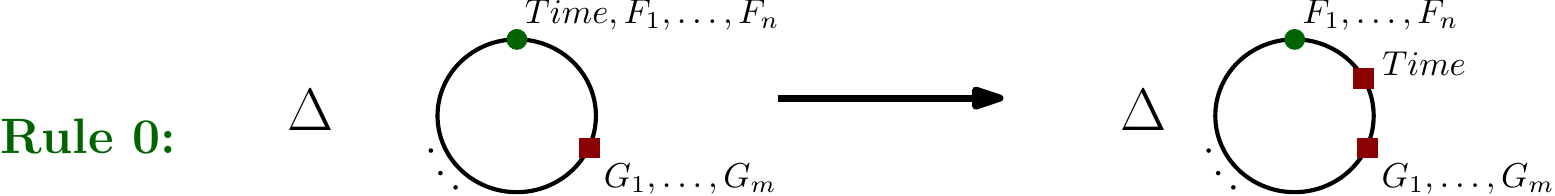}
\figspace

\textbullet\quad\textbf{\small Time alone and not in the zero point nor in the last class in the unit circle:}
\figspace

\includegraphics[width=8cm]{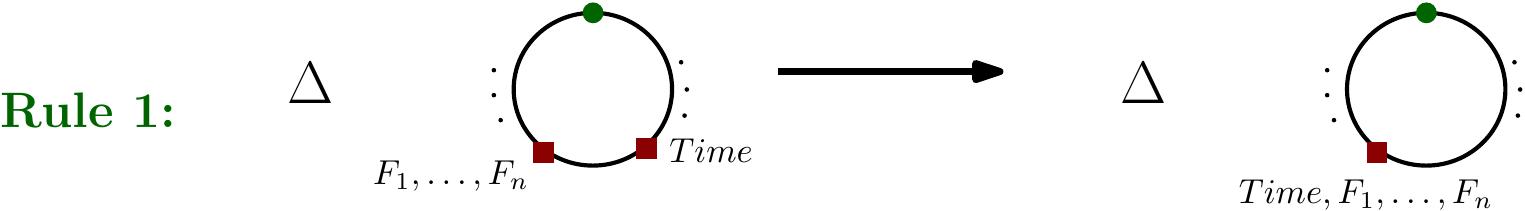}
\figspace

\textbullet\quad\textbf{\small Time not alone and not in the zero point nor in the last class in the unit circle:}
\figspace

\includegraphics[width=10cm]{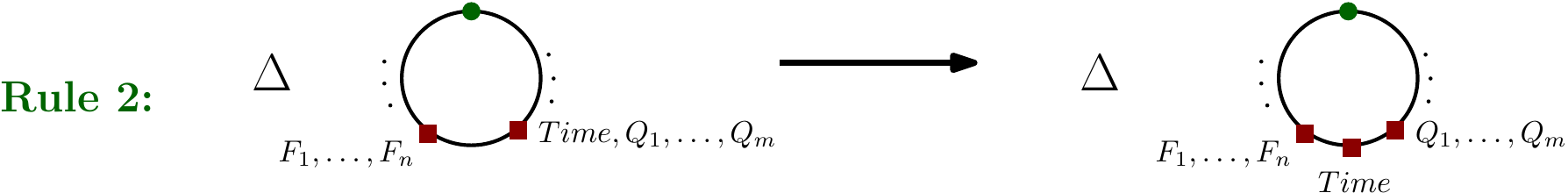}
\figspace

\textbullet\quad\textbf{\small Time not alone and in the last class in the unit circle which may be at the zero point:}
\figspace

\includegraphics[width=9cm]{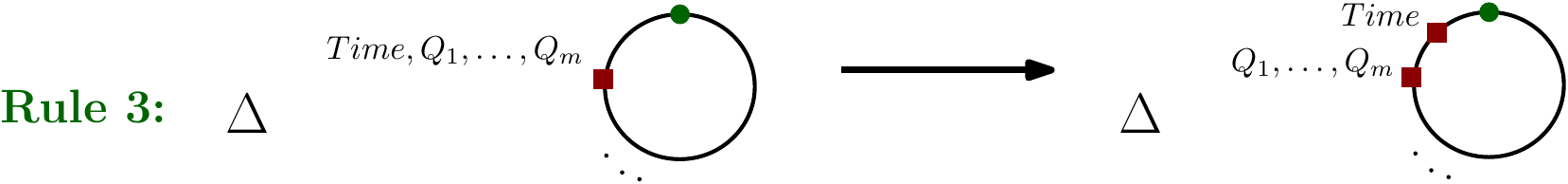}
\figspace

 \caption{Rewrite Rules for Time Advancement using Circle-Configurations.
 }
 \label{fig:time-advance}
\end{figure}

\begin{figure}[t]
\textbullet\quad\textbf{\small Time alone and in the last class in cnit circle - Case 1: $m > 0, k \geq 0, n\geq 0$ and $\delta_1 > 1$:}
\figspace

\includegraphics[width=12cm]{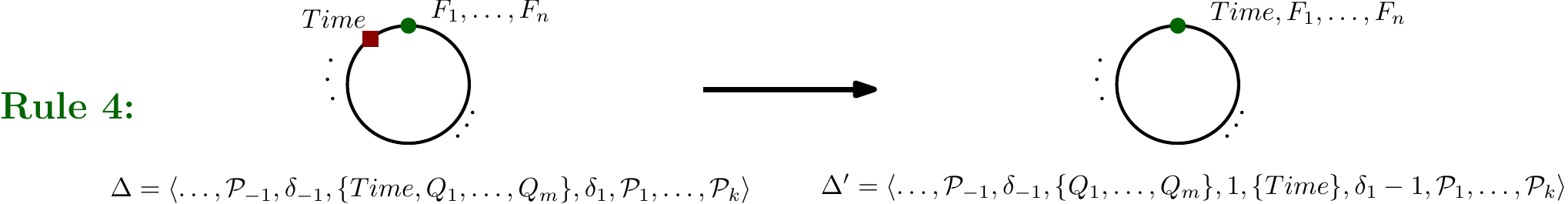}
\figspace

\textbullet\quad\textbf{\small Time alone and in the last class in unit circle - Case 2: $m > 0, k \geq 1$ and $n\geq 0$:}
\figspace

\includegraphics[width=11cm]{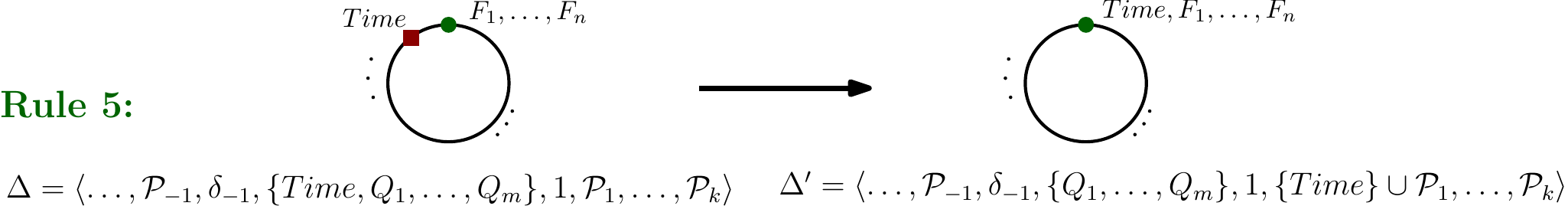}
\figspace

\textbullet\quad\textbf{\small Time alone and in the last class in unit circle - Case 3: $k \geq 0$ such that $\delta_{1} > 1$ when $k > 0$ and $\gamma_{-1}$
is the truncated time of $\delta_{-1} + 1$:}
\figspace

\includegraphics[width=11cm]{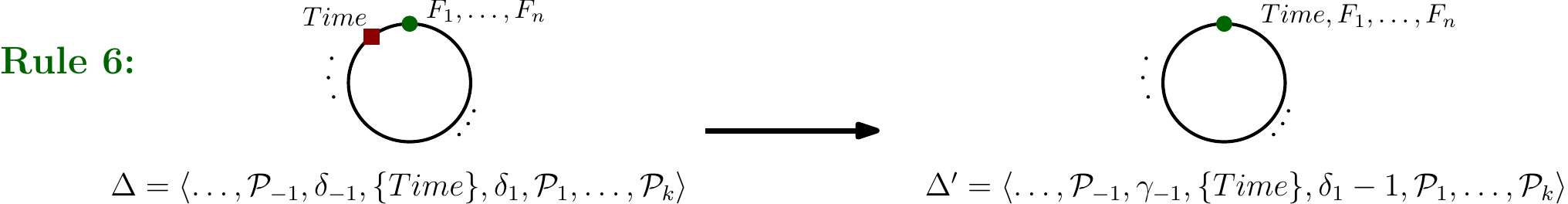}
\figspace

\textbullet\quad\textbf{\small Time alone and in the last class in unit circle - Case 4: $k \geq 1$ and $\gamma_{-1}$
is the truncated time of $\delta_{-1} + 1$:}
\figspace

\includegraphics[width=11cm]{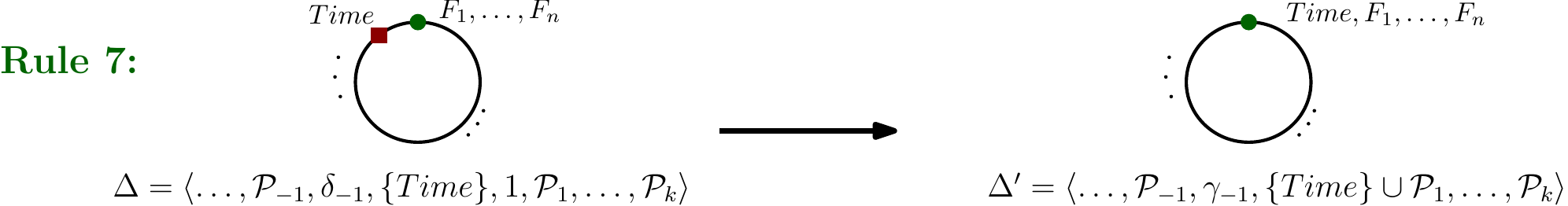}

\figspace

\vspace{-1mm}
 \caption{(Cont.) Rewrite Rules for Time Advancement using Circle-Configurations.
 }
 \label{fig:time-advance-cont}
\end{figure}

\vspace{0,5em}
Time advancement rule $Tick$ is represented with a set of $Next$ rules, shown in Figure~\ref{fig:time-advance} and Figure~\ref{fig:time-advance-cont}. 
For a given circle-configuration, exactly one of the 8 $Next$ rules applies, depending on the position of 
the fact $Time$ on the unit circle $\Uscr$ with respect to the remaining facts.  For example, if the fact $Time$ is alone on the unit circle (and not at the zero point, nor in the last class), time advancement is modelled by placing $Time$ in the next class (clock-wise), see Rule~1. If we want to advance time from a  circle-configuration where $Time$ is in a class on a unit circle together with other facts (and not at the zero point, nor in the last class), we would place $Time$ alone on the unit circle, at any point  just before the next class (clock-wise) on the unit circle, see Rule~2. 
Cases when $Time$ is in the last class, in addition to changes in the unit circle, require updating of the $\delta$-configuration of the resulting circle-configuration, see  Figure~\ref{fig:time-advance-cont}. 

Since, application of a $Next$ rule changes the placement of the fact $Time$ on the unit circle w.r.t. remaining facts, 
 the enabling and the resulting circle-configurations are different. Moreover, they represent configurations that may not not be equivalent. In fact resulting configuration is either equivalent to the enabling configuration or  is its immediate successor. 

Correspondence  to immediate successors refines our previous result~\cite[Lemma 1]{kanovich15post},
 stating that to a single $Tick$ rule corresponds a sequence of $Next$ rules, and, vice versa,  a sequence of $Next$ rules represents a single $Tick$ rule for an adequately chosen value $\varepsilon$ of  time advancement.
Here, we show how $Next$ rule relates to $Tick_{IS}$ rule.

\begin{proposition}
\label{th: next-imm-succ}
Let $\Tscr $
be an MSR with dense time,  $\GS$ a goal, $\CS$  a critical configuration specification and  $\Sscr_0$ an initial configuration.
Let $\Dmax$ be an upper bound on the numeric values  appearing in $\Tscr$,  $\GS$, $\Cscr\Sscr$ and $\Sscr_0$, and consider immediate successors of configurations w.r.t. the set of constraints from~$\Cscr_{\Dmax}$. 
\\ 
If \ \, $\Ascr_{1} \lra_{Next} {\Ascr_2}$  \ \, then \ \, 
$\Sscr_1 \lra_{Tick_{IS}} \Sscr_2$, 
\ or \ ~$\Sscr_1 \equiv \Sscr_2$ \ 
(in case $Next$ is Rule 0, for $n=0$, Figure~\ref{fig:time-advance}, \ or \ {Rule 4}, for $n=0$, Figure~\ref{fig:time-advance-cont}).
\\
If \ \, $\Sscr_1 \lra_{Tick_{IS}} \Sscr_2$ \ \, then \ \, $\Ascr_{1} \lra_{Next^n} {\Ascr_2}, \ n \in \{1,2,3\} \ . $
\end{proposition}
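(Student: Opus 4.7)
The plan is to prove both directions by case analysis on the eight $Next$ rules depicted in Figures~\ref{fig:time-advance} and~\ref{fig:time-advance-cont}, exploiting the fact that the position of $Time$ in the unit circle together with the $\delta$-configuration determine exactly which constraints from $\Cscr_{\Dmax}$ are satisfied by any representative configuration.

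For the forward direction $\Ascr_1 \lra_{Next} \Ascr_2 \Rightarrow \Sscr_1 \lra_{Tick_{IS}} \Sscr_2$ (or $\Sscr_1 \equiv \Sscr_2$), I would first exhibit, for each of the eight rules, an explicit $\varepsilon > 0$ such that some representative $\Sscr_1$ of $\Ascr_1$ advances under $Tick_\varepsilon$ to a representative of $\Ascr_2$: namely, the smallest $\varepsilon$ that causes $Time$ to reach the class indicated by the rule's right-hand side (for Rules 4--7 this also accounts for the integer boundary and the corresponding $\delta$-update). Then I would verify condition (iii) of Definition~\ref{def: succesor conf} by arguing that for every $\varepsilon' \in (0, \varepsilon)$ the intermediate configuration satisfies exactly the same constraints of $\Cscr_{\Dmax}$ as $\Sscr_1$ (if $Time$ has not yet reached the new class) or as $\Sscr_2$ (if it has just reached it). The two degenerate cases, Rule~0 with $n=0$ and Rule~4 with $n=0$, deserve separate treatment: the positional shift in the unit circle does not translate to any observable change in constraint satisfaction, so $\Sscr_1 \equiv \Sscr_2$ and no genuine immediate successor step occurs.

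For the reverse direction, I would start from $\Sscr_1 \lra_{Tick_{IS}} \Sscr_2$ and dispatch three scenarios according to the placement of $Time$ on the unit circle of $\Ascr_1$. When $Time$ is already alone in its class and the step does not require crossing the zero point, a single $Next$ application suffices. When $Time$ shares its class with other facts, two $Next$ steps are required: first separating $Time$ from its shared class into a fresh adjacent class, then advancing to the target class. When the immediate successor step additionally crosses the integer boundary associated with the last class, a third $Next$ step may be needed to settle the unit circle at the zero point and update the $\delta$-configuration accordingly. In each scenario, by the correspondence between circle-configurations and equivalence classes of configurations established in~\cite{kanovich15post}, the composed $Next^n$ step produces precisely $\Ascr_2$.

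The main obstacle will be the bookkeeping required to verify condition (iii) of Definition~\ref{def: succesor conf} for every one of the eight $Next$ rules, and dually, to justify that the decomposition in the reverse direction never needs more than three $Next$ steps. The latter amounts to showing that beyond separating $Time$ from its class, advancing to the next class, and possibly crossing the zero point, any further $Next$ application would flip an additional constraint from $\Cscr_{\Dmax}$, which would contradict the single-step nature of $Tick_{IS}$ as enforced by clause (iii) of Definition~\ref{def: succesor conf}.
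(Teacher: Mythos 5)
Your forward direction is essentially the paper's argument: a case analysis over the eight $Next$ rules, exhibiting the witnessing $\varepsilon$, checking clauses (ii) and (iii) of Definition~\ref{def: succesor conf}, and isolating the two degenerate cases (Rule~0 with $n=0$ and Rule~4 with $n=0$) in which no constraint from $\Cscr_{\Dmax}$ changes and hence $\Sscr_1 \equiv \Sscr_2$. That half is sound.

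The reverse direction, however, justifies $n\in\{1,2,3\}$ by the wrong mechanism. You claim that when $Time$ shares its class on the unit circle, a single $Tick_{IS}$ step must be simulated by two $Next$ applications: first separating $Time$ into a fresh adjacent class, then advancing it to the next existing class. This is incorrect on both counts. The separation alone already flips an equality constraint $T = T_F \pm D$ into a strict inequality, so that single $Next$ already realizes a complete immediate-successor step. Worse, the composite of your two steps is \emph{not} an immediate successor of the starting configuration: any intermediate configuration, with $Time$ strictly between the old class and the next existing one, satisfies neither the old equality nor the new one, so it agrees with neither endpoint and clause (iii) of Definition~\ref{def: succesor conf} fails. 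Your scenario therefore decomposes one $Tick_{IS}$ into two $Tick_{IS}$ steps rather than simulating one. The genuine source of $n=2$ or $n=3$ is elsewhere: the degenerate rules (Rule~0 with $n=0$ and Rule~4 with $n=0$) move $Time$ relative to the zero point without changing any constraint in $\Cscr_{\Dmax}$, because the zero point records information (integrality of the decimal part of the timestamp) that is strictly finer than the equivalence relation. A $Tick_{IS}$ whose source circle-configuration has $Time$ alone at, or about to land on, the zero point must be simulated by one or two of these constraint-preserving $Next$ steps followed by the single constraint-changing one, with all intermediate circle-configurations corresponding to configurations equivalent to $\Sscr_1$. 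This is the argument the paper gives, and it is what your three-scenario analysis needs to be replaced with.
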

\begin{proof}
Both circle-configurations (\ie,~equivalence of configurations) and immediate successor configurations are defined w.r.t. an upper bound $\Dmax$. We set the value of $\Dmax$ to be an upper bound on numeric values in $\Tscr $, $\CS$  and  $\Sscr_0$ and refer to the same bound $\Dmax$ in both cases. Let     $\Ascr_1$  and $\Ascr_2$ be the circle-configurations of the configurations $\Sscr_1$ and $\Sscr_2$, respectively.

Notice that, as per Definition~\ref{def:circle-conf}, facts in the same class on the unit circle satisfy some constraint of the form $T_1=T_2\pm D$, while facts placed in different classes on the unit circle satisfy some constraint of the form $T_1< T_2\pm D$.

Let \,${\Ascr_1} \lra_{Next} {\Ascr_2}$. Then, as illustrated in Figure~\ref{fig:time-advance} and Figure~\ref{fig:time-advance-cont}, application of any of the 8 $Next$ rules, changes the placement of the fact $Time$ of the unit circle from one class to another. 

 There are two possibilities. In one case fact $Time$ is moved from a class containing some fact $F$ to a new class (see Rules 0,2,3). 
In the other case there exists some fact $F$ in $\Sscr_1$ such that $Time $ and $F$ are in different classes in $\Sscr_1$, but in the same class in  $\Sscr_2$ (see Rules1,4-7). 
Configurations $\Sscr_1$ and $\Sscr_2$ do not satisfy the same constraints referring to facts $Time$ and $F$, except in the  two cases shown below:

\vspace{3pt} \noindent
\begin{minipage}[c]{0,48 \textwidth}
\includegraphics[width={\textwidth}]{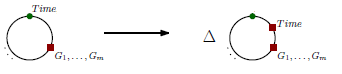}
\end{minipage}
\quad
\begin{minipage}[c]{0,48 \textwidth}
\includegraphics[width={\textwidth}]{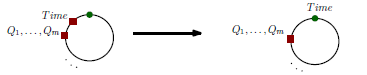}
\end{minipage}\\[5pt]
In the case shown to the left (Figure~\ref{fig:time-advance}, Rule 0, for $n=0$) fact $Time$ is the only fact placed at zero point, 
 while in the case shown to the right (Figure~\ref{fig:time-advance-cont}, \mbox{Rule 4}, for $n=0$) there are no facts at the zero point 
 and ~$Time $~ is alone in the last class of the unit circle. 
Only in this two cases configurations are equivalent, ~$\Sscr_1 \equiv \Sscr_2$.

Moreover,  fact $Time$ is placed clock-wise, either to a position  immediately following its previous position, but before any existing class, or it is places exactly to the first class clock-wise. This ensures that there are no "intermediate" configurations, \ie~that $\Sscr_2$ is an immediate successor of  $\Sscr_1$, \ie,~ $\Sscr_1 \lra_{Tick_{IS}} \Sscr_2$, 
except in above two cases. 

\vspace{0,3em}
Conversely, if $\Sscr_2$ is an immediate successor of  $\Sscr_1$, then $\Sscr_1$ is transformed into $\Sscr_2$ by means of a $Tick_{IS}$ rule.
Then, when representing this time advancement with circle-configurations, the  placement of the facts different from $Time$ on the unit circle of  ${\Ascr_1}$  does not change. At the same time, the change in  placement of the fact $Time$ on the unit circle should be such to satisfy the condition of immediate successor configuration w.r.t the corresponding configurations. Figure~\ref{fig:time-advance} and Figure~\ref{fig:time-advance-cont} illustrate exactly such change in the placement of $Time$ on the unit circle, updating the $\delta$-configuration as well, when necessary.
The change  in placement of the fact $Time$ on the unit circle represents a minimal (or the exact) time advancement such that some constraint is no longer satisfied. Above two exceptions, related to the placement of the fact $Time$ at the zero point, require  2 or 3 $Next$ rules, as shown below:

\vspace{5pt} 
\includegraphics[width={0.95\textwidth}]{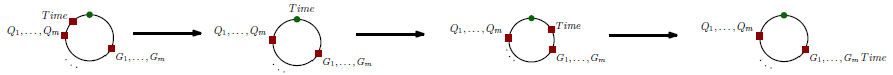}

\vspace{3pt} \noindent
Intermediate circle-configurations correspond to the configurations equivalent to the first one, but not to the final one.
\qed
\end{proof}

The above result ensures that the representation of time advancement on circle-configurations using~$Next$ rules is sound and complete. 
To $Next$ rules correspond $Tick_{IS}$ rules, and conversely, any $Tick_{\varepsilon}$ rule can be decomposed into a finite number of $Tick_{\varepsilon_i}$ rules (see Remark~\ref{remark-consecutive-ticks}),  each of which corresponds  to one, two or three $Next$ rules.

 \vspace{0,5em}
We have considered traces over circle-configurations and showed that obtained traces  over circle-configurations are a sound and complete representation  of the set of traces over concrete configurations  with dense time.
Notice that circle-configurations are symbolic form, containing only untimed facts, a few auxiliary symbols and a bounded number of natural numbers. The are no real numbers included, and yet there is enough information for the sound and faithful representation of timed systems with dense time.
 This means that we can search for solutions of some problems symbolically, that is, 
without writing down the explicit values of the timestamps, \ie, the real numbers, in a trace.

 \vspace{0,5em}
In~\cite{kanovich15post} we investigated reachability problem which did not involve critical configurations.
The notion of a non-critical trace in a timed MSR with dense time has not been investigated yet.  Since we now address the
non-critical reachability problem
which involves non-critical traces, for our complexity results for timed MSR with dense time, we need to show that searching for traces in a symbolic form, using circle-configurations, is  sound and complete  also with respect to compliance \ie, preserves non-critical traces.

The notion of non-critical traces over  circle-configurations is not as complicated and delicate as the notion of a non-critical traces over configurations in systems with dense time, given in Definition~\ref{def:compliant}. 
Recall that the $Tick$ rule can be instantiated for any non-negative real value $\varepsilon$, denoting an arbitrary advancement of time, which can cause "skipping" over critical configurations. 
Such a phenomena does not appear in traces over  circle-configurations where $Next$ rules are used for time advancement.
%
%
Following Proposition~\ref{th: IS critical}  and Proposition~\ref{th: next-imm-succ}, there is no 
 issue of "skipping" over critical circle-con\-fi\-gu\-ra\-tions with the time advancement $Next$. 
When a $Next$ rule is applied, the configuration corresponding to the resulting circle-configuration is an immediate successor of the  configuration corresponding to the  enabling configuration, or equivalent to it. That is, each of  8 $Next$ rules corresponds to a time advancement that is just enough, or exactly enough, so that some time constraint  involving the global time is no longer satisfied.
In such a way, a single $Next$ rule models either the minimal or the exact  advancement of time for which the equivalence class changes. 
Since there is no "skipping" over  circle-configurations, there is no need for decomposition of time advancements $Next$, as is the case with the $Tick$ rule. Hence, the related notion of compliance, \ie, non-critical traces, is straightforward.

\begin{definition}
\label{def: c-c compliance}
Let $\Tscr$ be a  a timed MSR with dense time and $\Cscr\Sscr$ a critical configuration specification.
  A trace over corresponding  {circle-configurations}    is \emph{non-critical} 
 if it does not contain any critical circle-configuration. 
\end{definition}


Recall that the notion of a non-critical trace in timed MSR with dense time 
 potentially  involves  checking compliance through an infinite number of traces. Fortunately, this is not the case for non-critical traces over circle-configurations.
Since there is no "skipping" over  circle-configurations when using $Next$ rules, there is no need for decomposition of time advancements $Next$, as is the case with the $Tick$ rule. 
Smaller advancements of time would have either the exact same effect or no effect on a corresponding equivalence class.
On the other hand, larger advancements of time are modelled by a sequence of several $Next$ rules.
This is essential for the complexity of the problems involving non-critical traces, and we, therefore,  rely on non-critical traces over circle-configurations when searching for the solutions of our problems involving timed MSR with dense time.
The following proposition states that such a bisimulation is sound and complete w.r.t. application of rules and non-critical traces.

%




\begin{proposition}
\label{thm: circle configurations}
Given any timed MSR $\Tscr$ with dense time, a goal  $\GS$, a critical configuration specification $\Cscr\Sscr$  and  an initial configuration $\Sscr_0$, \
any non-critical trace starting from the given initial configuration  $\Sscr_0$ to a goal configuration can be conceived as a non-critical trace over  circle-configurations, starting from initial circle-configuration $\mathcal{A}_{\mathcal{S}_0}$  and reaching a goal circle-configuration. 
\end{proposition}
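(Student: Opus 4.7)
The plan is to establish a bisimulation between dense-time traces and circle-configuration traces that additionally preserves the non-critical property, proving both directions of the conceptual equivalence. The backbone of the argument is already available: the bisimulation between $Tick$-based traces and $Next$-based traces on circle-configurations was established in the original work on circle-configurations, and instantaneous rules lift and project cleanly since they depend only on the (untimed) facts and satisfied constraints. What remains, and is the heart of this proposition, is to check that compliance, in the sense of Definition~\ref{def:compliant}, is preserved in both directions.

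For the forward direction, I would start with a non-critical trace $\Sscr_0 \lra \Sscr_1 \lra \cdots \lra \Sscr_n$ leading to a goal configuration. I would first normalize it by decomposing every $Tick_{\varepsilon}$ into a maximal sequence of $Tick_{IS}$ steps with respect to $\Dmax$; this is always possible because between $\Sscr_i$ and $\Sscr_{i+1}$ the set of satisfied constraints from $\Cscr_{\Dmax}$ changes only finitely many times as $\varepsilon$ grows from $0$ to the total advance. By Remark~\ref{remark-consecutive-ticks} this refinement does not change the observable process. The refined trace is still non-critical: indeed, each intermediate configuration introduced lies strictly between an enabling configuration and its immediate successor (both non-critical by assumption and by Definition~\ref{def:compliant} applied to the original trace), so Proposition~\ref{th: IS critical} rules out any critical intermediate configuration. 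Next, I apply Proposition~\ref{th: next-imm-succ} to convert each $Tick_{IS}$ into one, two, or three $Next$ rules, and translate each instantaneous rule $r$ into the corresponding symbolic rule $[r]$. The resulting circle-configuration trace reaches a goal circle-configuration because the criterion for a configuration being a goal depends only on the untimed facts and the satisfied constraints, both of which are faithfully recorded by the circle-configuration. Non-criticality on the symbolic side, in the sense of Definition~\ref{def: c-c compliance}, follows because every visited circle-configuration corresponds (up to the equivalence of Definition~\ref{def:equivalence}) to a non-critical configuration in the refined trace, and criticality is a property of the equivalence class.

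For the backward direction, I would start with a non-critical trace $\Ascr_0 \lra \Ascr_1 \lra \cdots \lra \Ascr_N$ on circle-configurations reaching a goal circle-configuration, with $\Ascr_0 = \Ascr_{\Sscr_0}$. Using the existing bisimulation I instantiate concrete timestamps to realize each step: $Next$ rules are grouped into maximal contiguous blocks and collectively realized by a single $Tick_{\varepsilon}$, while each $[r]$ is realized by $r$. The resulting concrete trace leads from $\Sscr_0$ to a configuration in the equivalence class of the goal circle-configuration, which is therefore a goal configuration. To see that the concrete trace is non-critical in the sense of Definition~\ref{def:compliant}, suppose for contradiction that some decomposition $\Sscr_i \lra_{Tick_{\varepsilon_1}} \Sscr' \lra_{Tick_{\varepsilon_2}} \Sscr_{i+1}$ exposes a critical configuration $\Sscr'$. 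The configuration $\Sscr'$ would correspond either to the circle-configuration at $\Sscr_i$, to that at $\Sscr_{i+1}$, or to one of the intermediate circle-configurations obtained by inserting additional $Next$ steps in the block realizing the $Tick_{\varepsilon_1+\varepsilon_2}$; in every case, by Proposition~\ref{th: next-imm-succ} together with the equivalence-preservation of criticality, one of these circle-configurations would itself be critical, contradicting the assumption that the symbolic trace is non-critical.

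The main obstacle, and the step requiring the most care, is the forward direction's refinement argument: one must show that the decomposition of a single $Tick_\varepsilon$ into a chain of $Tick_{IS}$ steps captures every constraint change from $\Cscr_{\Dmax}$ that can occur along the way, so that no critical configuration is ever skipped between consecutive immediate successors. This relies precisely on the minimality built into Definition~\ref{def: succesor conf}(iii), which guarantees that within each $Tick_{IS}$ segment the set of satisfied constraints from $\Cscr_{\Dmax}$ is one of only two possibilities, allowing Proposition~\ref{th: IS critical} to eliminate critical intermediates. The remaining book-keeping, namely the correspondence of instantaneous rules and the handling of the two exceptional $Next$ rules that produce equivalent (rather than strictly successor) configurations, is routine given the cited propositions.
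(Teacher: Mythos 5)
Your proposal is correct and follows essentially the same route as the paper's proof: both rely on the existing bisimulation from the earlier work, refine each $Tick_{\varepsilon}$ into a chain of $Tick_{IS}$ steps (plus the equivalence-preserving exceptional cases) matched one-for-one with $Next$ rules via Proposition~\ref{th: next-imm-succ}, and use Proposition~\ref{th: IS critical} together with the fact that criticality is an invariant of the equivalence class to transfer non-criticality in both directions. The only cosmetic difference is that your forward direction invokes Proposition~\ref{th: IS critical} where Definition~\ref{def:compliant} already yields non-criticality of the inserted intermediate configurations directly, which is harmless.
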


\begin{proof} 
In our previous work~\cite[Theorem 2]{kanovich15post} we have  shown a related bisimulation result for the reachability problem.
Here we need to also address critical configurations. In particular, we must check time advancements more carefully in order to provide non-critical traces. 

To the given set of instantaneous rules of  timed MSR with dense time $\Tscr$, $r$, correspond the rules $[r]$  over circle-configurations, so that
\ 
$$\widetilde{\Rscr} = \{ \ [r] : r \in \Rscr \ \} \cup Next $$ 
\ is the set of rules over circle-configurations.
Let $\Ascr_0$ be the circle-configuration of $\Sscr_0$.


In~\cite[Theorem 2]{kanovich15post} we have shown that the equivalence among configurations is well defined with respect to application of rules. Namely, we have shown that for any instantaneous rule $r$, it is the case that
 $\Sscr_1 \lra_r \Sscr_2$ \  if and only  if \ \mbox{$\Ascr_1 \lra_{[r]} \Ascr_2$}, that is: 
\[
\begin{array}{cccccccccc}
\Sscr_1 & \to_{r}  &  \Sscr_{2} & 
\\
\arr & &   \arr &    
\\
\Ascr_1 &\to_{[r]} & \Ascr_{2} & \ \ 
\end{array}
\]
where    $\Ascr_1$  and $\Ascr_2$ are circle-configurations of the configurations $\Sscr_1$, and $\Sscr_2$, respectively.
Also, it is the case that  \  $\Sscr_1 \lra_{Tick} \Sscr_2$ \  if and only if \ $\Ascr_1 \lra_{Next}^* \Ascr_2$, that is:
\[
\begin{array}{cccccccccc}
 \ \Sscr_1  & \to_{Tick}  & \Sscr_2
\\
  \arr &     &  \arr
\\
\   \Ascr_1 & \to_{{Next}^*}  & \Ascr_2 & \ \ .
\end{array}
\]
Again, $\Ascr_1$  and $\Ascr_2$ are circle-configurations of the configurations $\Sscr_1$, and $\Sscr_2$, respectively.
Notice that to each $Tick$ rule in the trace over configurations corresponds a (possibly empty) sequence of $Next$ rules in the matching trace over circle-configurations. 

Using induction on the length od a subtrace we can easily show that any trace of a timed MSR can be represented as a trace over corresponding circle-configurations, and vice versa, as shown below:
\[
\begin{array}{ccccccccccc}
\Pscr : & \ \ \Sscr_I = \Sscr_0 & \to_{r_1} \dots \to_{r_{i-1}} &  \Sscr_{i-1} & \to_{r_i} \  & \ \Sscr_{i}  & \to_{r_{i+1}}  \dots &  \to_{r_{l-1}} \  & \ \Sscr_{l}  
\\
& \arr & &   \arr &     &  \arr &  &   &  \arr
\\
\Pscr' : & \ \ \Ascr_I=\Ascr_0 &\to_{r_1'} \dots \to_{r_{i-1}'} & \Ascr_{i-1} & \to_{r_i'}&   \Ascr_i & \to_{r_{i+1}'} \dots &
\to_{r_{l-1}'} \  & \ \Ascr_{l}  
\end{array}
\]
where $r_i'$ is either the instantaneous rule $[r_i]$  over circle-configurations,  one or more $Next$ rules as given in Figures  \ref{fig:time-advance} and \ref{fig:time-advance-cont}, or an empty rule. 

\vspace{0,3em}
We can easily conclude that bisimulation preserves goals.
Since $ \Ascr_l$ is the circle-con\-fi\-gu\-ra\-ti\-on of  $\Sscr_l$, it immediately follows that $\Sscr_i $ 
 is a goal configuration  iff $\Sscr_i$ is a goal circle-configuration.

\vspace{0,3em}
It remains to show that bisimulation preserves non-critical traces. For that purpose we decompose multiple $Next$ rules in $\Pscr'$.
As per   Proposition~\ref{th: next-imm-succ} the following correspondences for one or none applications of $Next$ rules holds:
\[
\begin{array}{ccccccccccccll}
 \ \Sscr  & \to_{Tick_{IS}}  &\  \Sscr    & &   & \ \Sscr  & \equiv  & \Sscr'   & &   & \ \Sscr  & \equiv  & \Sscr' 
\\
  \arr &     &  \arr &  \quad & \text{,} \quad &  \arr &     &  \arr    & \quad \text{or}  \quad & & \arr &     &  \arr
\\
\   \Ascr & \to_{Next}  & \Ascr' & \quad &  &  \   \Ascr & \to_{{Next}}  & \Ascr'   & \quad &  &  \   \Ascr & \to_{{Next}^0}  & \Ascr' = \Ascr
& \ \ .
\end{array}
\]
We can hence consider corresponding  traces $\Pscr $ and $\Pscr'$  as:
\[
\begin{array}{ccccccccccc}
\Pscr : & \ \ \Sscr_I =\bar{\Sscr}_0 & \to_{\bar{r}_1} \dots \to_{\bar{r}_{i-1}} &  \bar{\Sscr}_{i-1} & \to_{\bar{r}_i} \  & \ \bar{\Sscr}_{i}  & \to_{\bar{r}_{i+1}}  \dots &  \to_{\bar{r}_{n-1}} \  & \ \bar{\Sscr}_{n}
\\
& \arr & &   \arr &     &  \arr &  &   &  \arr
\\[3pt]
\Pscr' : & \ \ \Ascr_I=\bar{\Ascr}_0 &\to_{\bar{r}_1'} \dots \to_{\bar{r}_{i-1}'} & \bar{\Ascr}_{i-1} & \to_{\bar{r}_i'}&   \bar{\Ascr}_i & \to_{\bar{r}_{i+1}'} \dots &
\to_{\bar{r}_{n-1}'} \  & \ \bar{\Ascr}_{n}
\end{array}
\]
 where $\bar{r}_i'$ is either the instantaneous rule $[\bar{r}_i]$  over circle-configurations,  one $Next$ rule as given in Figures  \ref{fig:time-advance} and \ref{fig:time-advance-cont}, or an empty rule, and all $\bar{r}_i$  $Tick$ rules are either $Tick_{IS}$ rules or 
$Tick$ rules for which enabling and resulting configurations are equivalent.


\vspace{0,3em}
If the trace $\Pscr$ 
is non-critical, all configurations $\bar{\Sscr}_i$ are not critical. 
Recall that a circle-configuration 
 is critical iff the corresponding configuration 
 is critical.
Hence the corresponding circle-configurations $\bar{\Ascr}_i$ are not critical as well. Then the  above  trace $\Pscr'$
 contains no critical circle-configuration and is therefore non-critical (Definition~\ref{def: c-c compliance}).

For the other direction, assume the trace $\Pscr'$  
 is non-critical.
Then all circle-con\-fi\-gu\-ra\-ti\-ons $\bar{\Ascr}_i$ are not critical, and hence configurations $\bar{\Sscr}_i$ are not critical.
As per Definition~\ref{def:compliant}, we must consider decompositions of $Tick$ rules in $\Pscr$.

In the case ~$\bar{\Sscr}_{i}  \to_{Tick}  \bar{\Sscr}_{i+1}$~ and ~$\bar{\Sscr}_{i}  \equiv \bar{\Sscr}_{i+1}$,  following Remark~\ref{remark-consecutive-ticks},
such decompositions do not contain critical configurations since $\bar{\Sscr}_{i} $ and $\bar{\Sscr}_{i+1}$ are not critical.

In the other case, \ $\bar{\Sscr}_{i}  \to_{Tick_{IS}}  \bar{\Sscr}_{i+1}$. Then, from the Proposition~\ref{th: IS critical} we can conclude that
there are no critical configurations $\Sscr'$ such that  \ $\bar{\Sscr}_i \lra_{Tick} \Sscr' \lra_{Tick} \bar{\Sscr}_{i+1} \ $, \
$\forall i$.

\qed
\end{proof}


\subsection{PSPACE-Completeness of Non-critical Reachability Problem}

\vspace{1em}
 PSPACE-hardness of  non-critical reachability problem
 can be  infered  from our 
previous work~\cite{kanovich11jar} and \cite{kanovichJCS17} by considering non-critical reachability problem with no critical configurations.

\begin{proposition}
\label{th: non-critical-hard-real}
The   non-critical reachability problem timed MSR $\Ascr$ with dense time
is PSPACE-hard.
\end{proposition}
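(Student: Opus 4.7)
The plan is to obtain PSPACE-hardness by a trivial polynomial-time reduction from the plain reachability problem for balanced timed MSR with dense time, which is already known to be PSPACE-hard by \cite{kanovichJCS17}. The key observation is that non-critical reachability degenerates to ordinary reachability whenever the critical configuration specification is empty, so hardness transfers for free.

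Concretely, given an arbitrary instance $\tup{\Tscr, \Sscr_0, \GS}$ of the reachability problem for balanced timed MSR with dense time, I would construct the instance $\tup{\Tscr, \Sscr_0, \GS, \CS}$ of the non-critical reachability problem by choosing $\CS = \emptyset$. This transformation is clearly computable in logarithmic space (the MSR, the initial configuration, and the goal are copied verbatim, while the critical configuration specification is the empty set), so it witnesses a polynomial-time reduction.

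The correctness of the reduction is immediate from the definitions. With $\CS = \emptyset$, the condition for classifying a configuration $\Sscr$ as critical — namely that some pair $\tup{\Sscr_i, \Cscr_i} \in \CS$ admits a grounding substitution embedding $\Sscr_i$ into $\Sscr$ and satisfying $\Cscr_i$ — is vacuously false for every $\Sscr$. Hence no configuration is critical, and in particular no configuration arising in any $Tick$-decomposition of the form required by Definition~\ref{def:compliant} is critical either. Consequently every trace of $\Tscr$ from $\Sscr_0$ is non-critical, and there is a trace from $\Sscr_0$ to a goal configuration if and only if there is a non-critical such trace.

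Since the reduction preserves yes-instances in both directions and the source problem is PSPACE-hard by \cite{kanovichJCS17} (with the PSPACE lower bound ultimately traceable to the encoding of linearly bounded Turing machines into balanced untimed MSR in \cite{kanovich11jar}, which embeds into the timed setting by trivially ignoring timestamps and the $Tick$ rule), the non-critical reachability problem is PSPACE-hard as well. There is no real obstacle here; the only point worth recording is that the empty critical specification is a legitimate input under the Definition of $\CS$ (which allows sets of any cardinality, including zero), so the reduction is well-formed.
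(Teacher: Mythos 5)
Your proposal is correct and matches the paper's argument exactly: the paper likewise infers PSPACE-hardness from the reachability problem of~\cite{kanovich11jar,kanovichJCS17} by taking the critical configuration specification to be empty, so that every trace is vacuously non-critical and the two problems coincide. Your write-up merely makes explicit the (straightforward) correctness of this reduction, which the paper leaves implicit.
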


For   non-critical reachability problem we need to construct a non-critical trace from the given initial configuration to a goal configuration.
As per Proposition~\ref{thm: circle configurations}, instead of non-critical traces over configurations of a given timed MSR with dense time, we can consider non-critical traces over circle-configurations.

The following lemma establishes a criteria related to the length of traces, that is 
an upper bound
 on the number of different circle-configurations.

 \begin{lemma}\label{lemma:numstates-real}
Let $\Tscr$ be a timed MSR with dense time 
constructed over  a finite alphabet $\Sigma$ with $J$ predicate symbols and $E$ constant and function symbols.
Let  $\CS$ a critical configuration specification, $\GS$ a goal, 
 $\Sscr_0$  be an initial configuration with $m$ facts, 
$\uSize$ an upper bound on the size of facts and $\Dmax$ an upper bound on the numeric values appearing in $\Tscr$, $\CS$, $\GS$ and 
$\Sscr_0$.

 Then the number of different circle-configurations, denoted
by $L(\iSize,\uSize,\Dmax)$, is
$$
 L(\iSize,\uSize,\Dmax) \leq J^\iSize  (E + 2 \iSize \uSize)^{\iSize
\uSize} \iSize^\iSize (\Dmax+2)^{(\iSize-1)} \ .
$$
\end{lemma}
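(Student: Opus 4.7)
The plan is to enumerate the structural components of a circle-configuration and to bound the number of choices for each component, then multiply. Recall from Definition~\ref{def:circle-conf} that a circle-configuration $\langle \Delta_{\Sscr}, \Uscr_{\Sscr}\rangle$ is built from three ingredients: (1) the underlying multiset of at most $\iSize$ untimed facts; (2) the $\delta$-configuration, which is an ordered partition of these facts with integer ``gaps'' $\delta_{i,i+1}$ between consecutive classes; and (3) the unit circle, which is another ordered partition of the same facts, with one class distinguished as the zero point. The bound will be obtained as a product of independent upper bounds on the number of choices for each ingredient.

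First I would bound the number of distinct untimed facts appearing in a configuration. Each fact is $P(u_1,\ldots,u_\ell)$ with $\ell \le \uSize$, so there are at most $\uSize$ argument positions per fact and $\iSize\uSize$ positions in total across $\iSize$ facts. The predicate of each fact is chosen among $J$ options, contributing $J^{\iSize}$. Each argument position is filled by a term built from the $E$ constant and function symbols of $\Sigma$ or by a fresh value (nonce). Since all nonces in a configuration can be renamed (Definition~\ref{def:equivalence}), we may use at most $\iSize\uSize$ canonical nonce names, so each position draws from at most $E + 2\iSize\uSize$ symbols, yielding the factor $(E+2\iSize\uSize)^{\iSize\uSize}$.

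Next, I would bound the number of admissible arrangements on the unit circle and in the $\delta$-configuration. Since there are $\iSize$ facts and both structures partition them into at most $\iSize$ ordered classes, I assign to each of the $\iSize$ facts a class index in a canonical presentation, obtaining a crude $\iSize^{\iSize}$ bound that simultaneously encodes the class memberships, the ordering of the classes and the position of the zero point. For the $\delta$-configuration, only the integer gaps $\delta_{i,i+1}$ remain to be specified. There are at most $\iSize - 1$ such gaps, and each takes a value in $\{0, 1, \ldots, \Dmax\}\cup\{\infty\}$ by Definition~\ref{def:circle-conf}, which gives at most $\Dmax + 2$ choices per gap and hence the factor $(\Dmax+2)^{\iSize-1}$.

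Finally, multiplying the four factors yields the stated bound
\[
L(\iSize,\uSize,\Dmax) \;\leq\; J^{\iSize}\,(E+2\iSize\uSize)^{\iSize\uSize}\,\iSize^{\iSize}\,(\Dmax+2)^{\iSize-1}.
\]
The main subtlety, and the only place where care is needed, is the treatment of nonces: because circle-configurations are considered up to the bijective renaming of nonce names (condition (i) of Definition~\ref{def:equivalence}), one must commit to a canonical naming to avoid an unbounded inflation of the count. The bound $\iSize\uSize$ on the number of distinct nonces that can possibly occur simultaneously in an $\iSize$-fact configuration of fact-size $\uSize$ makes this rigorous and absorbs all nonce choices into the factor $(E+2\iSize\uSize)^{\iSize\uSize}$. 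All other steps are routine counting and give slack, which is acceptable since Lemma~\ref{lemma:numstates-real} only claims an upper bound.
\qed
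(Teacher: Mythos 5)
Your proposal is correct and follows essentially the same counting argument as the paper: $J^{\iSize}$ choices of predicate names, $(E+2\iSize\uSize)^{\iSize\uSize}$ choices of symbols in argument positions (with nonces reduced to a fixed canonical set of $2\iSize\uSize$ names via Definition~\ref{def:equivalence}), $\iSize^{\iSize}$ class assignments for the unit circle, and $(\Dmax+2)^{\iSize-1}$ choices for the truncated gaps $\delta_{i,i+1}$. The only cosmetic discrepancy is that you speak of ``at most $\iSize\uSize$ canonical nonce names'' while using the factor $E+2\iSize\uSize$; the paper reserves $2\iSize\uSize$ names precisely so that freshly created values can always be renamed apart from the constants already present in the enabling configuration, and this does not affect the stated bound.
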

\begin{proof} 
A circle-configuration consists of 
a $\delta$-configuration $\Delta$:
$$
 \Delta = \left\langle \
 \{Q_{1}^1, \ldots, Q_{m_1}^1\}, \,\delta_{1, 2},\, \{Q_{1}^2, \ldots, Q_{m_2}^2\}, \ldots ,  \,  \delta_{j-1, j}, \,\{Q_{1}^j, \ldots, Q_{m_j}^j\} \
 \right\rangle
$$
 and unit circle $\Uscr$:
$$
\Uscr = [\,\{Q_{1}^0, \ldots, Q_{m_0}^0\}_\Zscr,\, \{Q_{1}^1, \ldots, Q_{m_1}^1\}, \ldots, \,\{Q_{1}^j, \ldots, Q_{m_j}^j\}\,] \ .
$$

In each component, $\Delta$ and $\Uscr$,  there are $m$ facts, therefore there are  $\iSize$ slots for predicate names and at most $\iSize \uSize$ slots for  constants and function symbols. Constants can be either constants in the initial alphabet 
$\Sigma$ or names for fresh values (nonces). Following \cite{kanovich13ic} and Definition \ref{def:equivalence}, we need to consider only $2\iSize \uSize$ names for fresh values (nonces). 
Whenever an action creates some fresh values, instead of  new constants that have not yet appeared in the trace, we use nonce names from this fixed set,  different from any constants in the enabling configuration.  In that way, we are able to simulate an unbounded number of nonces using a set of only  $2mk$ nonce names.

For $\delta_{i, i+1}$, only the  time differences up to $\Dmax$ have to be considered together with the symbol $\infty$, and there
are at most $\iSize-1$ slots for time
differences $\delta_{i,j}$ in 
$\Delta$. 

Finally, for each $\delta$-configuration, there are at most $\iSize^\iSize$ unit circles as for each 
fact $F$ we can assign a class, $\Uscr(F)$, and there are at most $\iSize$ classes. 
\qed
\end{proof}

Since, as per Lemma~\ref{lemma:numstates-real},  there are only $L(m,k,\Dmax)$ different circle-confi\-gu\-ra\-tions,  a non-critical trace $\Pscr$ of length greater than $L(m,k,\Dmax) $ necessarily contains the same circle-configuration $\Cscr$ twice, that is, there is a loop in the trace. 
Hence, there is a shorter plan that is the solution to the same non-critical reachability problem.
Therefore, we can nondeterministically search for plans of  length  bounded by $L(m,k,\Dmax)$.


\begin{theorem}
\label{th:PSPACE-feasibility-real}
 Assume $\Sigma$ a finite alphabet with $J$ predicate symbols and $E$ constant and function symbols, $\Tscr$ a MSR with dense time constructed over $\Sigma$, an initial configuration $\Sscr_0$  with $m$ facts, $\CS$ a critical configuration specification, $\GS$ a goal, 
 $k$ an upper-bound on the size of facts, and $\Dmax$ an upper-bound on the numeric values in $\Sscr_0, \Tscr$, $\CS$ and $\GS$.
Let  functions $\next, \critical$ and  $\goal$ 
  run in Turing space bounded by a polynomial in \ $m$, $k$, $\log_2(\Dmax)$ and return 1,  respectively, when a rule in $\Tscr$ is applicable to a given circle-configuration,   when a circle-configuration is critical with respect to $\CS$, and
when a circle-configuration is a goal circle-configuration with respect to $\GS$.

There is an algorithm that, given an initial configuration $\Sscr_0$, decides whether non-critical trace
 in $\Tscr$ from $\Sscr_0$ to some goal configuration 
and the algorithm runs in space bounded by a polynomial in $m,k$ and $log_2(\Dmax)$. 

The polynomial is in fact \ $\log_2(L(m,k,\Dmax))$.
\end{theorem}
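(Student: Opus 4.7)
The plan is to reduce the problem to a nondeterministic reachability search on the (symbolic) space of circle-configurations, bound the required trace length by the number of distinct circle-configurations, and then invoke Savitch's theorem. By Proposition~\ref{thm: circle configurations}, every non-critical trace from $\Sscr_0$ to a goal configuration in $\Tscr$ corresponds to a non-critical trace from $\Ascr_{\Sscr_0}$ to a goal circle-configuration under the rule set $\widetilde{\Rscr} = \{[r] : r \in \Tscr\} \cup Next$, and conversely. Thus the oracles $\next$, $\critical$, $\goal$ (together with the construction of $\Ascr_{\Sscr_0}$) are exactly what is needed to work entirely at the circle-configuration level.

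The second step is to bound the search depth. By Lemma~\ref{lemma:numstates-real}, there are at most $L(m, k, \Dmax)$ distinct circle-configurations. Hence, if any non-critical trace reaches a goal circle-configuration, there is one of length at most $L(m, k, \Dmax)$, obtained by excising any loop between two occurrences of the same circle-configuration; this excision preserves non-criticality because all intermediate circle-configurations encountered are themselves non-critical, and by Proposition~\ref{th: next-imm-succ} together with Proposition~\ref{th: IS critical} the $Next$ steps cannot skip over critical concrete configurations. Taking $\log_2$, the quantity $\log_2 L(m,k,\Dmax)$ is bounded by a polynomial in $m$, $k$, and $\log_2(\Dmax)$, so a binary counter of that size fits in polynomial space.

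The algorithm proceeds as follows. Maintain only the current circle-configuration $\Ascr_i$ (initialized to $\Ascr_{\Sscr_0}$) and a binary step counter initialized to $0$. At each iteration: (i) use $\critical$ to verify that $\Ascr_i$ is not critical; (ii) use $\goal$ to check whether $\Ascr_i$ is a goal circle-configuration, and if so accept; (iii) otherwise nondeterministically guess the successor $\Ascr_{i+1}$ together with the name of a rule in $\widetilde{\Rscr}$, and use $\next$ to verify that this rule is indeed applicable and produces $\Ascr_{i+1}$; (iv) overwrite $\Ascr_i$ with $\Ascr_{i+1}$, increment the counter, and reject if it exceeds $L(m,k,\Dmax)$. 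A circle-configuration is described by at most $m$ facts of size at most $k$, at most $m-1$ time gaps each in $\{0, 1, \ldots, \Dmax+1, \infty\}$ (each stored in $O(\log_2(\Dmax))$ bits), and a partition into at most $m$ classes, so its representation size is polynomial in $m$, $k$ and $\log_2(\Dmax)$. Combined with the polynomial-space oracles, the total working space is polynomial in the stated parameters.

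The main obstacle is really conceptual rather than computational: one must be sure that the loop-excision argument is compatible with the subtle notion of non-critical trace in the dense-time setting, where the concrete $Tick$ rule may in principle hop over a critical configuration. This is precisely what Proposition~\ref{th: IS critical} and the soundness half of Proposition~\ref{th: next-imm-succ} buy us, since they guarantee that working at the granularity of $Tick_{IS}$ (equivalently, of $Next$) cannot miss a critical state. Once that is in hand, the construction above gives an NPSPACE decision procedure, and Savitch's theorem converts it to a deterministic PSPACE algorithm running in space $O(\log_2 L(m,k,\Dmax))$, matching the bound claimed.
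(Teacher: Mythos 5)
Your proposal is correct and follows essentially the same route as the paper: reduce to a nondeterministic search over circle-configurations via Proposition~\ref{thm: circle configurations}, bound the trace length by $L(m,k,\Dmax)$ using Lemma~\ref{lemma:numstates-real} and loop excision, store only the current circle-configuration plus a binary step counter, and determinize with Savitch's theorem. The only cosmetic difference is that you guess the successor circle-configuration and verify it, whereas the paper guesses the applicable rule and applies it; these are equivalent, and your explicit remark that loop excision preserves non-criticality (because non-criticality of a symbolic trace is just the absence of critical circle-configurations) is a point the paper makes in the surrounding text rather than inside the proof.
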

\begin{proof}
We adapt the non-deterministic  algorithm used in~\cite[Teorem~7.3]{kanovichJCS17}
 in order to  obtain non-critical traces. 
The algorithm  accepts  whenever there is a non-critical trace which 
 starts from $\Sscr_0$ and  reaches a goal configuration. 
We then apply Savitch's Theorem to determinize this algorithm. That is, we  rely
 on the fact that PSPACE and NPSPACE are the same complexity class~\cite{savitch}

Instead of searching for traces over concrete configurations, for the PSPACE result we rely on the equivalence among configurations and Proposition~\ref{thm: circle configurations}  which enable us to search for non-critical traces over circle-configurations, constructed using  the rules 
$ [r]$, for $ r \in \Tscr $ and the $ Next $ rules.


Because of Lemma~\ref{lemma:numstates-real}, it suffices to consider traces of size bounded by the number of different circle-configurations, ~$L(m,k,\Dmax) $ (stored in binary). Recall that
$$
 L(m,k,\Dmax) \leq J^\iSize  (E + 2 \iSize \uSize)^{\iSize \uSize} \iSize^\iSize (\Dmax+2)^{(\iSize-1)} \ .
$$

Let $i$ be a natural number such that \ $0 \leq i \leq L(m,k,\Dmax) +1$.
The algorithm starts with\ $i=0$ and  $\Ascr_0$ set as  the  circle-configuration of $\Sscr_0$, and iterates the following sequence of operations:

\begin{enumerate}
 \item If $\Ascr_i$ is a critical  circle-configuration, \ie, if $\critical(\Ascr_i) = 1$, then return FAIL, otherwise continue;
 \item If $\Ascr_i$ is a goal  circle-configuration, \ie, if $\goal(\Ascr_i) = 1$, then return ACCEPT, otherwise continue;
  \item If $i \geq L(m,k,\Dmax) $, then ACCEPT;
else continue;
 \item 
Non-deterministically guess an 
action, $r$, from $\Tscr$ applicable to $\Ascr_i$,
\ie, such an action $r$ that  $\next(r,\Ascr_i) = 1$. If  so
 replace $\Ascr_i$ with the  circle-configuration $\Ascr_{i+1}$
resulting from applying the action $[r]$ to the  circle-configuration
$\Ascr_i$. 
Otherwise FAIL;
  \item Set \ $ i = i + 1$.
\end{enumerate}

\vspace{3pt}
We now show that this algorithm runs in polynomial space.
The greatest number reached by the counter is $ L(m,k,\Dmax) $,  which stored in binary encoding takes 
space $ \log(L(m,k,\Dmax) + 1) $ \ bounded by:
\[
\begin{array}[]{lcl}
m\log(J) +  mk\log(E + 2 mk) + m\log m  + (m-1)\log(\Dmax + 2).
\end{array}
\]
Therefore,  to store the values of the step-counter, one only needs space that is polynomial in the given inputs.

Also, any  circle-configuration, $\Ascr_i$ can be stored in space that is polynomial to the given inputs. 
Namely,  $\Ascr_i$ is of the form $ \tup{\Delta, \Uscr}$, with
$$
 \Delta = \left\langle \
 \{Q_{1}^1, \ldots, Q_{m_1}^1\}, \, \delta_{1, 2}, \, \{Q_{1}^2, \ldots, Q_{m_2}^2\}, \ldots ,  \,  \delta_{j-1, j},\, \{Q_{1}^j, \ldots, Q_{m_j}^j\} \
 \right\rangle
$$
 $$
\Uscr = [\,\{Q_{1}^0, \ldots, Q_{m_0}^0\}_\Zscr, \,\{Q_{1}^1, \ldots, Q_{m_1}^1\}, \ldots, \,\{Q_{1}^j, \ldots, Q_{m_j}^j\}\,] \ .
$$
Values of the truncated time differences, $\delta_{i,j}$, are bounded, so each $\delta$-configuration $\Delta$ can be stored in space $mk+(m-1)(\Dmax+2)$.
Each unit circle $\Uscr$ contains $m$ facts, 
 so it can be stored in  space $mk+(m-1)$, using a symbol for separating classes at most $(m-1)$ times.
Hence, each circle-configuration can be stored in space that is polynomially bounded with respect to the inputs.

Finally, in step 4.  algorithm needs to store the action $r$. This is done by remembering two circle-configurations. Moving from one  circle-configuration to another is achieved by updating the facts, updating the positions of facts and the corresponding truncated time differences. Hence, step 3. can be performed in space polynomial to  $m,k, log_2(\Dmax)$ and  the sizes of $\critical$, $\next$ and $\goal$.
Recall that functions $\critical$, $\next$ ~and  $\goal$ ~ run in space polynomial to the inputs.
\qed
\end{proof}

\begin{corollary}
  The   non-critical reachability problem for balanced timed MSR with dense time is PSPACE-complete when assuming a bound on the size of facts.
\end{corollary}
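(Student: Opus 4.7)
The plan is to assemble the two matching bounds that have just been established. For PSPACE-hardness I would simply invoke Proposition~\ref{th: non-critical-hard-real}, noting that the non-critical reachability problem contains ordinary reachability as the special case in which the critical configuration specification $\CS$ is empty. Since reachability for balanced timed MSR with dense time is already PSPACE-hard by the results cited from \cite{kanovich11jar,kanovichJCS17}, that hardness transfers verbatim to the non-critical version, so no new reduction is needed on this side.

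For the PSPACE upper bound I would appeal to Theorem~\ref{th:PSPACE-feasibility-real}, which gives a nondeterministic algorithm whose space usage is bounded by $\log_2(L(m,k,\Dmax))$, and hence by a polynomial in $m$, $k$ and $\log_2(\Dmax)$, all of which are polynomial in the size of the problem description under the balanced assumption and the bound $k$ on the size of facts. The algorithm is stated modulo the oracles $\next$, $\critical$ and $\goal$, and the only verification required by the corollary is that these oracles can indeed be realized in polynomial space on circle-configurations. This is where the main, though essentially routine, obstacle lies: one must check that rule applicability on a circle-configuration reduces to inspecting positions on the unit circle, inspecting the truncated differences in the $\delta$-configuration, and evaluating guard constraints drawn from $\Cscr_{\Dmax}$, and that matching a circle-configuration against the fixed templates in $\CS$ and $\GS$ is a pattern-matching task of the same shape. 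Each such test manipulates only objects of polynomial size, so it fits comfortably within polynomial space.

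Having obtained membership in NPSPACE, I would close by invoking Savitch's theorem, exactly as in the proof of Theorem~\ref{th:PSPACE-feasibility-real}, to collapse NPSPACE to PSPACE. Since Proposition~\ref{thm: circle configurations} has already ensured that the bisimulation between concrete traces and traces over circle-configurations preserves both non-criticality and goal-reachability, no further soundness argument is required at the level of dense-time configurations. Combining the hardness direction from Proposition~\ref{th: non-critical-hard-real} with the PSPACE upper bound just obtained then yields the claimed PSPACE-completeness.
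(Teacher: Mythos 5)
Your proposal is correct and follows essentially the same route the paper intends: PSPACE-hardness from Proposition~\ref{th: non-critical-hard-real} combined with the PSPACE upper bound from Theorem~\ref{th:PSPACE-feasibility-real} (via circle-configurations, the bound $L(m,k,\Dmax)$, the polynomial-space oracles, and Savitch's theorem), with Proposition~\ref{thm: circle configurations} supplying the soundness and completeness of the symbolic search. No further argument is needed.
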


\emph{Acknowledgments:}
 Ban Kirigin is supported in part by the Croatian Science Foundation under the project UIP-05-2017-9219. Scedrov is partially supported by ONR. The participation of Kanovich and Scedrov in the preparation of this article was partially within the framework of the Basic Research Program at the National Research University Higher School of Economics (HSE) and partially supported within the framework of a subsidy by the Russian Academic Excellence Project `5-100'. Talcott is partly supported by ONR grant N00014-15-1-2202 and NRL grant N0017317-1-G002.



\end{document}